\newcommand{\FF}{{\mathbb{F}}}
\newcommand{\seq}{\subseteq}
\newcommand{\ZZ}{\mathbb{Z}}
\newcommand{\av}{\mathbf{a}}
\newcommand{\bv}{\mathbf{b}}
\newcommand{\cv}{\mathbf{c}}
\newcommand{\xv}{\mathbf{x}}
\newcommand{\yv}{\mathbf{y}}
\apptocmd\normalsize{%
    \abovedisplayskip=9pt plus 3pt minus 6pt
    \abovedisplayshortskip=3pt plus 3pt
    \belowdisplayskip=9pt plus 3pt minus 9pt
    \belowdisplayshortskip=6pt plus 3pt minus 4pt
}{}{}
\theoremstyle{theorem}
\newtheorem{theorem}{Theorem}[section]
\newtheorem{lem}[theorem]{Lemma}
\newtheorem{cor}[theorem]{Corollary}
\theoremstyle{definition}
\newtheorem{ex}[theorem]{Example}
\newtheorem{definition}{Definition}[]
\newtheorem{rem}{Remark}
\title{Skew cyclic codes over $\ZZ_4+v\ZZ_4$ with derivation: structural properties and computational results}
\author{Djoko Suprijanto
\footnote{Combinatorial Mathematics Research Group, Faculty of Mathematics and Natural Sciences,
Institut Teknologi Bandung,
Jl. Ganesha 10, Bandung, 40132,
INDONESIA,\hfill{djoko.suprijanto@itb.ac.id}}
\hspace{0.1cm}and Hopein Christofen Tang
\footnote{Combinatorial Mathematics Research Group, Faculty of Mathematics and Natural Sciences, Institut Teknologi Bandung, Jl. Ganesha 10, Bandung, 40132, INDONESIA.	{\bf Current address:} School of Mathematics and Statistics, UNSW, Sydney, Australia, \hfill \texttt{hopein.tang@unsw.edu.au}}
}
\date{}
\begin{document}

\maketitle

\begin{abstract}
In this work, we study a class of skew cyclic codes over the ring $R:=\ZZ_4+v\ZZ_4,$ where $v^2=v,$ with an automorphism $\theta$
and a derivation $\Delta_\theta,$ namely codes as modules over a skew polynomial ring  $R[x;\theta,\Delta_{\theta}],$ whose multiplication is defined using an automorphism $\theta$ and a derivation $\Delta_{\theta}.$  We investigate the structures of a skew polynomial ring $R[x;\theta,\Delta_{\theta}].$  We define $\Delta_{\theta}$-cyclic codes as a generalization of the notion of cyclic codes.  The properties of $\Delta_{\theta}$-cyclic codes as well as dual $\Delta_{\theta}$-cyclic codes are derived.  As an application, some new linear codes over $\ZZ_4$ with good parameters are obtained by Plotkin sum construction, also via a Gray map as well as residue and torsion codes of these codes.\\

\noindent{\bf Keywords: }Cyclic codes, quasi-cyclic codes, skew polynomial ring, skew cyclic codes, derivation.
\end{abstract}
	
\noindent{2020 MSC: 94B05, 94B15, 11T71}

\section{Introduction}

Cyclic codes are an important class of codes from both theoretical and practical viewpoints. Theoretically, cyclic codes have a rich mathematical theory, in particular, they have additional algebraic structures to make,
practically, the process of encoding and decoding cyclic codes is more efficient.

Cyclic codes over finite fields were first studied by Prange \cite{Prange1957} in 1957.  Since then, many coding theorists have made significant progress in studying cyclic codes for both the so-called random-error correction and burst-error correction (See, for example, \cite{Huffman} for the detailed description of random-error and burst-error correction).

In 2007, Boucher, Geiselmann, and Ulmer \cite{Boucher2007}  (see also \cite{Boucher2009a},\cite{Boucher2009b}) extended the notion of cyclic codes over finite fields by using generator polynomials in non-commutative skew polynomial rings.  The new notion of codes is called skew cyclic codes over finite fields. In general, a skew polynomial ring is not a unique factorization ring. In this case, there are typically more factors of $x^n-1$ in this ring than in the commutative case.  Hence, there are more skew cyclic codes than cyclic codes over finite fields. This new class of codes increases the number of possibilities for finding codes with good parameters. Boucher, Geiselmann, and Ulmer \cite{Boucher2007} also give many examples of codes that improve the previously best-known linear codes over finite fields.

Later, the notion of skew cyclic codes over finite fields was generalized to the skew cyclic codes over several kinds of finite rings.  Abualrub, Aydin, and Seneviratne  \cite{Abualrub2012} considered skew cyclic codes over $\FF_2+v\FF_2,$ where $v^2=v,$  and constructed optimal self-dual codes over this ring.  Gursoy, Siap, and Yildiz \cite{Gursoy2014} investigated structural properties of skew cyclic codes over $\FF_q+v\FF_q,$ with $v^2=v.$  They \cite{Gursoy2014} showed that skew cyclic codes over the ring are principally generated.  Later, the first author together with his coauthors considered  structural aspects of skew cyclic codes over the rings $A_k$ \cite{Djoko2016} and $B_k$ \cite{Djoko2018} (c.f. \cite{Djoko2018b}), respectively. Very recently, Benbelkacem, Ezerman, Abualrub, Aydin, and Batoul \cite{Ezerman2021} considered the skew cyclic codes over the mixed alphabet which are also a finite ring, denoted by $\FF_4R,$ where $R=\FF_4+v \FF_4$ with $v^2=v.$  They \cite{Ezerman2021} showed a natural connection between the skew cyclic codes over the ring to  DNA codes.

In the next development, Boucher and Ulmer \cite{Boucher2014} generalized the notion of skew cyclic codes over finite fields to the skew cyclic codes over finite fields with derivation.  They \cite{Boucher2014} also constructed MDS as well as MRD codes from certain families of the skew cyclic codes (see Section 4.3 in \cite{Boucher2014}). Sharma and Bhaintwal \cite{Sharma2018} extended the study of these skew cyclic codes over a finite ring, namely over the ring $\ZZ_4+u\ZZ_4,$ with $u^2=1.$ 
They obtained numerous linear codes over $\ZZ_4$ with good parameters using residue codes, Plotkin sum, or the Gray map they defined \cite{Sharma2018}. Very recently, Patel and Prakash \cite{Patel2022} have investigated the same object over the ring $\FF_q+u\FF_q+v\FF_q+uv\FF_q,$ with $u^2=u$ and $v^2=v,$ where $q$ is a prime power.  By the decomposition method, they \cite{Patel2022} obtained several optimal linear codes over $\FF_q.$

Continuing the study of \cite{Sharma2018} and \cite{Patel2022}, in this paper, we investigate a class of skew cyclic codes with derivation over the ring $\ZZ_4+v\ZZ_4,$ with $v^2=v.$  We derive several structural properties of skew cyclic codes with derivation over the ring $\ZZ_4+v\ZZ_4.$  As a by-product, we construct many new linear codes over $\ZZ_4$ with good parameters.

The organization of the paper is as follows.  In Section 2, we provide some definitions and basic facts related to the ring $\ZZ_4+v\ZZ_4$ and also the linear codes over the ring  $\ZZ_4+v\ZZ_4.$ We also define a Gray map from $\ZZ_4+v\ZZ_4$ to $\ZZ_4^2,$ which can be extended naturally to define the Gray map from $(\ZZ_4+v\ZZ_4)^n$ to $\ZZ_4^{2n}.$  Several properties of the skew-polynomial ring $(\ZZ_4+v\ZZ_4)[x;\theta,\Delta_{\theta}]$ are derived.  The notion of $\Delta_{\theta}$-cyclic codes, as well as dual of $\Delta_{\theta}$-cyclic codes as a generalization of cyclic codes together with their properties are investigated in Section 3 and Section 4, respectively. Several examples of linear codes over $\ZZ_4$ with good parameters obtained by using the Plotkin sum construction, a Gray map, or residue and torsion codes of these classes of codes are provided in Section 5.  The paper is ended with concluding remarks.
We follow \cite{Huffman} for undefined terms in coding theory.
	
\section{Preliminaries}

In this section, we present some definitions together with some basic facts regarding the ring $\ZZ_4+v\ZZ_4,$ linear codes over the ring, and the skew-polynomial ring $(\ZZ_4+v\ZZ_4)[x;\theta,\Delta_{\theta}]$ that is required in the next sections.

\subsection{The ring $\ZZ_4+v\ZZ_4$}
Let $R:=\ZZ_4+v\ZZ_4=\{a+bv:~a,b \in \ZZ_4\},$ with $v^2=v.$  This ring is isomorphic to a polynomial ring, namely $R \cong \displaystyle \frac{\ZZ_4[v]}{\langle v^2-v \rangle}.$  An element $a+bv \in R$ is a unit if and only if $a$ and $a+b$ both are units in $\ZZ_4$.  Since the units of $\ZZ_4$ are $1$ and $3,$ the units of $R$ are $1,$ $3,$ $1+2v,$ and $3+2v.$  Hence, the non-units of $R$ are
\[
\{0,~2,~v,~2v,~3v,~1+v,~1+3v,~2+v,~2+2v,~2+3v,~3+v,~3+3v\}.
\]
$R$ is a principal ideal ring with $7$ non-trivial ideals, namely
\[
\begin{aligned}
	\langle 2v \rangle &=\{0,2v\},\\
	\langle 2+2v \rangle &=\{0,2+2v\},\\
   	\langle 2 \rangle &=\{0,2,2v,2+2v\},\\
	\langle v\rangle &=\{0,v,2v,3v\}=\langle 3v \rangle,\\
	\langle 3+v \rangle &=\{0,1+3v,2+2v,3+v\}=\langle 1+3v\rangle,\\
	\langle 1+v \rangle &=\{0,2,2v,1+v,1+3v,2+2v,3+v,3+3v\}=\langle 3+3v\rangle,\\
	\langle 2+v \rangle &=\{0,2,v,2v,3v,2+v,2+2v,2+3v\}=\langle 2+3v\rangle.\\
\end{aligned}
\]
The maximal ideals of $R$ are $\langle 1+v \rangle$ and $\langle 2+v \rangle$.  Hence, $R$ is a semi-local ring.
%
%

For more information on the structures of the ring $R=\ZZ_4+v\ZZ_4,$ the reader can refer to \cite{Bandi2004}, \cite{Gao2017}, and \cite{Liu2020}.

\subsection{Linear codes over $R$}

Lee weight is an important weight to consider on $\ZZ_4$.  For $x \in \ZZ_4,$ the Lee
weight of $x,$ denoted by $w_L(x),$ is defined as $w_L(0)=0,$ $w_L(1)=1=w_L(3),$  $w_L(2)=2.$  The Lee weight for any vector $(r_0,r_1,\ldots,r_{n-1}) \in \ZZ_4^n$ is defined as the rational sum of Lee weights of its coordinates, namely $w_L((r_0,r_1,\ldots,r_{n-1}))=w_L(r_0)+w_L(r_1)+\cdots+w_L(r_{n-1}).$

Define a Gray map $\phi:R \longrightarrow \ZZ_4^2$ as
\[
\phi(a+bv)=(a,a+b).
\]
The Gray weight $w_G(a+bv)$ for any $a+bv \in R$ is defined as $w_G(a+bv)=w_L(\phi(a+bv)).$  The Gray weights of the elements of $R$ are given as follows.
\[
\begin{array}{c|cccccccc}
	\hline\hline
	x & 0 & 1 & 2 & 3 & v & 2v & 3v & 1+v \\
	w_G(x) & 0 & 2 & 4 & 2 & 1 & 2 & 1 & 3\\
	\hline
	x & 1+2v & 1+3v & 2+v & 2+2v & 2+3v & 3+v & 3+2v & 3+3v \\
	w_G(x) 	& 2 & 1 & 3 & 2 & 3 & 1 & 2 & 3\\
	\hline\hline
\end{array}
\]
The Gray map $\phi$ is extended naturally to $\Phi:R^n \longrightarrow \ZZ_4^{2n}$ as
\[
\Phi((a_0+b_0v,a_1+b_1v,\ldots,a_{n-1}+b_{n-1}v))=
(a_0,a_0+b_0,a_1,a_1+b_1,\ldots,a_{n-1},a_{n-1}+b_{n-1}),
\]
and the Gray weight of any vector $\xv \in R^n$ is defined as the rational sum of Gray weights of its coordinates.

A code $C$ of length $n$ over $R$ is a non-empty subset of $R^n.$  A code $C$ is called linear over $R$ if it is an $R$-submodule of $R^n.$ A linear code over $R$ is called free if it is free as an $R$-submodule. The Gray distance of any vectors $\xv,\yv \in R^n$ is defined as $d_G(\xv,\yv)=w_G(\xv-\yv)$ and the Lee distance of any vectors $\xv,\yv \in \ZZ_4^n$ is defined as $d_L(\xv,\yv)=w_L(\xv-\yv)$.  The minimum Gray distance $d_G(C)$ and the minimum Lee distance $d_L(C)$ of $C$ is defined as
$d_G(C):=\text{min}\{d_G(\xv,\yv):~\xv,\yv \in C,~\xv \neq \yv\}$ and
$d_L(C):=\text{min}\{d_L(\xv,\yv):~\xv,\yv \in C,~\xv \neq \yv\},$ respectively.
It is easy to verify that the Gray map $\Phi$ is a distance-preserving map (isometry) from $(R^n,d_G)$ to $(\ZZ_4^{2n},d_L).$

A (linear) code over the ring $\ZZ_4$ is defined similarly.  We write the parameters of a linear code $C$ over $\ZZ_4$ as $[n,4^{k_1}2^{k_2},d_L],$ where $n$ is the length of $C,$ $|C|=4^{k_1}2^{k_2},$ and $d_L=d_L(C).$  Moreover, following Hammons, Kumar, Calderbank, Sloane, and Sol\'{e}  \cite{Hammons1994} (cf. \cite{Wan}), we say that the code $C$ is of type $4^{k_1}2^{k_2}.$




For a linear code $C \seq R^n$ over $R,$ we define the residue code $Res(C)$ and the torsion code $Tor(C)$ of $C,$ respectively, as
\[
Res(C):=\{\av:~\av+\bv v \in C, \text{ for some }\bv \in \ZZ_4^n\},
\]
and
\[
Tor(C):=\{\bv:~\bv v \in C\}.
\]

We note that $Res(C)$ and $Tor(C)$ are linear codes of length $n$ over $\ZZ_4.$


Regarding the residue and torsion codes, we have the following property.

\begin{lem}\label{Res-1}
	Let $C$ be a free linear code of length $n$ over $R$ and $\{\cv_1,\cv_2,\ldots,\cv_k\},$ with $\cv_i=\av_i+\bv_iv,$ be a basis of $C.$ Then the following statements hold:
	\begin{itemize}
		\item[(1)] $Res(C)$ is a free linear code over $\ZZ_4$ with  $\{\av_1,\av_2,\ldots,\av_k\}$ as a basis.
		
		\item[(2)] $Tor(C)$ is a free linear code over $\ZZ_4$ with $\{\av_1+\bv_1,\av_2+\bv_2,\ldots,\av_k+\bv_k\}$ as a basis.
	\end{itemize}
\end{lem}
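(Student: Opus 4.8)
The plan is to realize both $Res(C)$ and $Tor(C)$ as coordinatewise images of $C$ under the two natural projections $R \to \ZZ_4$, and then to transport the freeness of $C$ across these maps. The key structural fact is that $v$ and $1-v$ are orthogonal idempotents of $R$ (indeed $v(1-v)=v-v^2=0$), so that $R \cong \ZZ_4 \times \ZZ_4$ via $a+bv \mapsto (a,a+b)$, which is precisely the Gray map $\phi$. Write $\pi\colon R \to \ZZ_4$, $\pi(a+bv)=a$, for reduction modulo $\langle v\rangle$, and $\tau\colon R \to \ZZ_4$, $\tau(a+bv)=a+b$, for reduction modulo $\langle 1-v\rangle$; both are ring homomorphisms, extended coordinatewise to $R^n \to \ZZ_4^n$, and both are $R$-linear once $\ZZ_4^n$ is given the induced $R$-action.

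First I would record the two identifications $Res(C)=\pi(C)$ and $Tor(C)=\tau(C)$. The first is immediate from the definition of $Res(C)$. The second requires a short argument: if $\av+\bv v \in C$ then $v(\av+\bv v)=(\av+\bv)v \in C$ by $R$-linearity of $C$, so $\av+\bv=\tau(\av+\bv v)\in Tor(C)$; conversely, if $\bv \in Tor(C)$ then $\bv v \in C$ and $\tau(\bv v)=\bv$, giving $\bv \in \tau(C)$. Hence $Tor(C)=\tau(C)$. Next comes generation: since $\pi$ and $\tau$ are $R$-linear and $C=\langle \cv_1,\ldots,\cv_k\rangle_R$, the images $\pi(C)$ and $\tau(C)$ are generated as $R$-modules, hence as $\ZZ_4$-modules (the $R$-action factoring through $\ZZ_4$), by $\pi(\cv_i)=\av_i$ and $\tau(\cv_i)=\av_i+\bv_i$, respectively. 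This already yields $Res(C)=\langle \av_1,\ldots,\av_k\rangle_{\ZZ_4}$ and $Tor(C)=\langle \av_1+\bv_1,\ldots,\av_k+\bv_k\rangle_{\ZZ_4}$.

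The main point, and the step I expect to be the crux, is the $\ZZ_4$-linear independence of these generating sets, which is what upgrades them from spanning sets to bases and establishes freeness. For $Res(C)$, suppose $\sum_i \lambda_i \av_i=0$ with $\lambda_i \in \ZZ_4$. Using the identity $(1-v)\cv_i=\av_i(1-v)$, I would multiply through by $1-v$ to get $\sum_i \big((1-v)\lambda_i\big)\cv_i=(1-v)\sum_i \lambda_i \av_i=0$; freeness of $C$ over $R$ then forces $(1-v)\lambda_i=0$ for every $i$, and since $1-v \mapsto (1,0)$ under the decomposition, reading the first coordinate gives $\lambda_i=0$. The argument for $Tor(C)$ is parallel: a relation $\sum_i \mu_i(\av_i+\bv_i)=0$ is multiplied by $v$, using $v\cv_i=(\av_i+\bv_i)v$, to yield $\sum_i (\mu_i v)\cv_i=0$, whence $\mu_i v=0$ and then $\mu_i=0$ because $v \mapsto (0,1)$.

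The only delicate point throughout is keeping straight that the idempotents $1-v$ and $v$ act as non-zero-divisors on the scalar copy of $\ZZ_4$ sitting inside $R$, namely that $(1-v)\lambda=0$ or $v\lambda=0$ with $\lambda \in \ZZ_4$ forces $\lambda=0$; this is exactly the mechanism that lets the freeness of $C$ over $R$ descend to freeness of $Res(C)$ and $Tor(C)$ over $\ZZ_4$ with the claimed bases. Everything else is bookkeeping with the two ring homomorphisms and the orthogonal idempotent decomposition.
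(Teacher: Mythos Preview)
Your proof is correct. The underlying mechanism for part~(1) is the same as the paper's---both show spanning by pushing an $R$-linear combination through the ``reduce mod $v$'' map, and both obtain $\ZZ_4$-independence by lifting a putative relation $\sum_i\lambda_i\av_i=0$ to an $R$-relation among the $\cv_i$---but you organise it differently. The paper expands $r_i=s_i+t_iv$ by hand and then argues independence by contradiction, leaving the lifting step implicit (``it is easy to prove''); you instead work systematically through the orthogonal idempotent decomposition $R\cong\ZZ_4\times\ZZ_4$, realising $Res(C)$ and $Tor(C)$ as the images of $C$ under the two ring projections $\pi,\tau$, and making the lifting explicit via multiplication by $1-v$ (respectively $v$). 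Your framework has the advantage of treating (1) and (2) in parallel with no extra work, whereas the paper proves only (1) and waves at (2); in particular, your identification $Tor(C)=\tau(C)$ is a clean observation that the paper does not state. The paper's version is slightly more elementary in that it never names the idempotents or the CRT isomorphism, but at the cost of leaving the key independence step to the reader.
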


\begin{proof}
We prove only part (1).	
	Let $\av \in Res(C).$  Then there exists $\bv \in \ZZ_4^n$ such that $\av+\bv v \in C.$  Since $\{\cv_1,\cv_2,\ldots,\cv_k\}$ is a basis of $C,$ there exist $r_1,r_2,\ldots,r_k \in R,$ with $r_i=s_i+t_iv,$ for $i \in [1,k]_\ZZ,$ such that
\[
\av+\bv v =\sum_{i=1}^{k}s_i\av_i+\left(\sum_{i=1}^{k}(s_i\bv_i+t_i\av_i+t_i\bv_i) \right)v,
\]
which means $\displaystyle \av=\sum_{i=1}^{k}s_i\av_i,$  and  $\{\av_1,\av_2,\ldots,\av_k\}$ generates $Res(C).$

Next, suppose on the contrary that $\{\av_1,\av_2,\ldots,\av_k\}$ is linearly dependent. It is easy to prove that $\{\cv_1,\cv_2,\ldots,\cv_k\}$ is also linearly dependent, a contradiction. Therefore, $\{\av_1,\av_2,\ldots,\av_k\}$ must be linear-ly independent and we conclude that $\{\av_1,\av_2,\ldots,\av_k\}$ is a basis of $Res(C).$
	
	
\end{proof}


Similarly, we obtain the following theorem.
\begin{theorem}\label{T-Gray-1}
	If $C$ be a free linear code of length $n$ over $R$ having a basis $\{\cv_1,\cv_2,\ldots,\cv_k\},$ then $\{\Phi(\cv_1),\Phi(\cv_1v),$ $\Phi(\cv_2),\Phi(\cv_2v),\ldots,\Phi(\cv_k),\Phi(\cv_kv)\}$ is a basis of a free linear code $\Phi(C).$
\end{theorem}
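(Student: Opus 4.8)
The plan is to realize $\Phi(C)$ as the image of $C$ under a $\ZZ_4$-module isomorphism and then transport a carefully chosen $\ZZ_4$-basis of $C$ across it. First I would record that the extended Gray map $\Phi$ is $\ZZ_4$-linear and injective. Additivity is immediate from $\phi(a+bv)=(a,a+b)$, and for $\lambda\in\ZZ_4$ one has $\phi(\lambda(a+bv))=(\lambda a,\lambda a+\lambda b)=\lambda\phi(a+bv)$; injectivity is already guaranteed by the isometry property noted above (or directly, since $(a,a+b)$ recovers both $a$ and $b$). Consequently $\Phi(C)$ is a $\ZZ_4$-submodule of $\ZZ_4^{2n}$, i.e.\ a linear code over $\ZZ_4$, and the restriction $\Phi|_C\colon C\to\Phi(C)$ is an isomorphism of $\ZZ_4$-modules.

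Next I would exhibit an explicit $\ZZ_4$-basis of $C$. The key structural observation is that $R=\ZZ_4+v\ZZ_4$ is a free $\ZZ_4$-module of rank $2$ with basis $\{1,v\}$, since every element of $R$ has a unique expression $a\cdot 1+b\cdot v$ with $a,b\in\ZZ_4$. Because $\{\cv_1,\ldots,\cv_k\}$ is a free $R$-basis of $C$, the usual change-of-base-ring argument then shows that $\{\cv_1,\cv_1v,\cv_2,\cv_2v,\ldots,\cv_k,\cv_kv\}$ is a $\ZZ_4$-basis of $C$. Concretely, writing any scalar $r_i=s_i+t_iv$ with $s_i,t_i\in\ZZ_4$ gives $r_i\cv_i=s_i\cv_i+t_i\cv_iv$ (using $v\cv_i=\cv_iv$ by commutativity of $R$), which yields spanning; and if $\sum_i(s_i\cv_i+t_i\cv_iv)=0$ then $\sum_i(s_i+t_iv)\cv_i=0$, so the $R$-independence of the $\cv_i$ forces $s_i+t_iv=0$, whence $s_i=t_i=0$ for all $i$ by the $\ZZ_4$-independence of $\{1,v\}$.

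Finally, since $\Phi|_C$ is a $\ZZ_4$-module isomorphism, it carries this $\ZZ_4$-basis of $C$ onto a $\ZZ_4$-basis $\{\Phi(\cv_1),\Phi(\cv_1v),\ldots,\Phi(\cv_k),\Phi(\cv_kv)\}$ of $\Phi(C)$; in particular $\Phi(C)$ is free of rank $2k$ with exactly the asserted basis. Spanning transports directly, and linear independence transports because $\Phi$ is injective: $\sum_i(s_i\Phi(\cv_i)+t_i\Phi(\cv_iv))=0$ forces $\Phi\!\left(\sum_i(s_i\cv_i+t_i\cv_iv)\right)=0$, hence $\sum_i(s_i\cv_i+t_i\cv_iv)=0$, and the coefficients vanish by the previous paragraph.

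I do not expect a genuine obstacle here; the only points needing care are the simultaneous check that $\Phi$ is both $\ZZ_4$-linear and injective (so that it is an isomorphism onto its image, not merely a distance-preserving set map) and the explicit identification of $\{\cv_i,\cv_iv\}$ as a $\ZZ_4$-basis via the freeness of $R$ over $\ZZ_4$. An alternative route, closer in spirit to Lemma~\ref{Res-1}, would compute the coordinates of $\Phi(\cv_i)$ and $\Phi(\cv_iv)$ and argue spanning and independence of the image vectors by hand, but the module-theoretic argument above sidesteps those routine coordinate manipulations.
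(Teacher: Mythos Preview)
Your argument is correct. The paper itself gives no explicit proof of Theorem~\ref{T-Gray-1}; it simply writes ``Similarly, we obtain the following theorem'' immediately after Lemma~\ref{Res-1}, so the intended proof is the coordinate computation in the style of that lemma: write an arbitrary element of $C$ as $\sum_i(s_i+t_iv)\cv_i$, expand, apply $\Phi$ coordinatewise, and verify spanning and independence of the image vectors by hand.

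Your route is different and cleaner: you isolate the two structural facts---that $\Phi$ is an injective $\ZZ_4$-linear map (hence a $\ZZ_4$-module isomorphism onto its image) and that $\{1,v\}$ is a $\ZZ_4$-basis of $R$---and then invoke the standard change-of-base-ring principle to obtain the $\ZZ_4$-basis $\{\cv_i,\cv_iv\}$ of $C$ and transport it across $\Phi$. This avoids the explicit coordinate unpacking of Lemma~\ref{Res-1} and makes transparent why the rank doubles to $2k$; it would also adapt verbatim to any free $\ZZ_4$-algebra in place of $R$. The paper's implicit approach, by contrast, is more elementary and stays closer to the concrete description of $\Phi$, which is helpful when one later wants the explicit generator matrices. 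You already note this alternative at the end of your proposal, so nothing is missing.
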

%
%

\subsection{Skew-polynomial ring $R[x;\theta,\Delta_{\theta}]$}

We first recall the definition of a derivation on a finite ring $\mathbf{R}$ following \cite{Boucher2014}.

\begin{definition}
Let $\mathbf{R}$ be a finite ring and $\Theta:\mathbf{R} \longrightarrow \mathbf{R}$ be an automorphism on $\mathbf{R}.$
Then a map $\Delta_\Theta:\mathbf{R} \longrightarrow \mathbf{R}$ is called a derivation on $\mathbf{R}$ if the following two
conditions are satisfied:
\begin{itemize}
\item[(i)]  $\Delta_\Theta(x+y)=\Delta_\Theta(x)+\Delta_\Theta(y),$ and

\item[(ii)] $\Delta_\Theta(xy)=\Delta_\Theta(x)y+\Theta(x)\Delta_\Theta(y).$
\end{itemize}
\end{definition}

Let $\mathbf{R}$ be a finite ring with an automorphism $\Theta$ and a derivation $\Delta_\Theta.$   The skew-polynomial ring $\mathbf{R}[x;\Theta,\Delta_\Theta]$ is the set of all polynomials over $\mathbf{R}$ with ordinary addition of polynomials and multiplication defined by
\[
xa:=\Theta(a)x+\Delta_\Theta(a),
\]
for any $a \in \mathbf{R}.$  This multiplication is extended to all polynomials in $\mathbf{R}[x;\Theta,\Delta_\Theta]$ in the usual manner.  This kind of ring was introduced by Ore \cite{Ore1933} in 1933, where $\mathbf{R}$ is the finite field $\FF_q.$  See also McDonald \cite{McDonald1974}.

Consider a map $\theta:R \longrightarrow R$ defined by $\theta(a+bv)=a+b-bv.$
It is easy to see that $\theta$ defines an automorphism of $R.$
Moreover, since for all $a+bv \in R$ we have $\theta^2(a+bv)=a+bv,$ we conclude that the order of $\theta$ is $2.$


\begin{lem}
	For all $x \in R,$ a map $\Delta_\theta:R \longrightarrow R$ such that
	\[
	\Delta_{\theta}(x)=(1+2v)(\theta(x)-x)
	\]
defines a derivation on $R.$	
\end{lem}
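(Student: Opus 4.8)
The plan is to verify directly the two conditions in the definition of a derivation, exploiting the structural observation that $\Delta_\theta$ is of the \emph{inner} form $x \mapsto \beta\,(\theta(x)-x)$ with the fixed element $\beta=1+2v$. Because $\theta$ is a ring automorphism (in particular additive and multiplicative) and $R$ is commutative, both conditions will fall out of short computations.

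Condition (i) is immediate: additivity of $\theta$ gives $\theta(x+y)-(x+y)=(\theta(x)-x)+(\theta(y)-y)$, and multiplying by the fixed element $1+2v$ and distributing yields $\Delta_\theta(x+y)=\Delta_\theta(x)+\Delta_\theta(y)$.

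The core of the argument is condition (ii). First I would expand the left-hand side using that $\theta$ respects products:
\[
\Delta_\theta(xy)=(1+2v)\bigl(\theta(x)\theta(y)-xy\bigr).
\]
Then I would expand the right-hand side, using commutativity of $R$ to interchange $1+2v$ with $\theta(x)$:
\[
\Delta_\theta(x)\,y+\theta(x)\,\Delta_\theta(y)=(1+2v)\bigl[(\theta(x)-x)\,y+\theta(x)(\theta(y)-y)\bigr].
\]
Inside the bracket the two occurrences of $\theta(x)\,y$ cancel, leaving exactly $\theta(x)\theta(y)-xy$, so the two sides agree and (ii) holds.

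I do not expect a genuine obstacle here: the only delicate point is that the cancellation of the cross terms relies on the commutativity of $R$ (over a noncommutative ring one would instead take the inner form $\beta x-\theta(x)\beta$), and one should keep in mind that the specific unit $1+2v$ is irrelevant to the derivation property itself---any fixed element of $R$ would do. As an alternative to the general computation, one may observe that both sides of (ii) are biadditive in $(x,y)$, so it suffices to check the identity for $x,y\in\{1,v\}$; using $\Delta_\theta(1)=0$, $\Delta_\theta(v)=1$, $\theta(v)=1+3v$, and $v^2=v$, this reduces the whole lemma to four elementary verifications.
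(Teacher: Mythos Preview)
Your proposal is correct and is exactly the routine verification the paper has in mind; the paper's own proof consists of the single line ``Straightforward by definition,'' and your argument is precisely the expansion of that remark.
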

\begin{proof}
    Straightforward by definition.
\end{proof}


We prove several properties related to the derivation $\Delta_\theta$ on $R.$  We begin with the following that can be derived by a routine computation.

\begin{lem}\label{Property D1}
Let $\Delta_{\theta}(x)=(1+2v)(\theta(x)-x)$  be a derivation on $R.$  Then the following statements hold:
\begin{itemize}
\item[(1)]  $\Delta_{\theta} \theta+\theta \Delta_{\theta} \equiv 0.$

\item[(2)]  $\Delta_{\theta} \Delta_{\theta} \equiv 0.$

\item[(3)]  for all $x \in R,$ $\Delta_{\theta}(x)=0$ $\iff$ $\theta(x)=x.$
\end{itemize}
\end{lem}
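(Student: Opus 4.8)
The plan is to first make the map $\Delta_{\theta}$ completely explicit, after which all three identities reduce to one-line checks. Writing a generic element as $x=a+bv$ with $a,b\in\ZZ_4$, I would compute $\theta(x)-x=(a+b-bv)-(a+bv)=b(1-2v)$ and then use the crucial arithmetic fact that $(1+2v)(1-2v)=1-4v^2=1$ in $R$, since $4=0$ in $\ZZ_4$. This yields the clean formula $\Delta_{\theta}(a+bv)=b\in\ZZ_4$; that is, $\Delta_{\theta}$ is simply extraction of the $v$-coordinate, and its image lies inside $\ZZ_4$, which is exactly the subring fixed pointwise by $\theta$ (note $\theta(a)=a$ for $a\in\ZZ_4$).

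With this formula in hand, I would prove statement (3) first, because the other two follow from it. Since $1+2v$ is a unit of $R$, multiplication by it is injective, so $\Delta_{\theta}(x)=(1+2v)(\theta(x)-x)=0$ holds if and only if $\theta(x)-x=0$, i.e.\ $\theta(x)=x$. For statement (2), I observe that $\Delta_{\theta}(x)=b$ lies in $\ZZ_4$, hence is fixed by $\theta$; applying (3) with $b$ in place of $x$ gives $\Delta_{\theta}(\Delta_{\theta}(x))=\Delta_{\theta}(b)=0$.

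For statement (1) I would evaluate the two terms directly on $x=a+bv$: the element $\theta(x)=(a+b)-bv$ has $v$-coordinate $-b$, so $\Delta_{\theta}(\theta(x))=-b$, while $\theta(\Delta_{\theta}(x))=\theta(b)=b$ since $b\in\ZZ_4$; their sum is $-b+b=0$. An alternative, coordinate-free route that I could record is to write $\Delta_{\theta}=M_u\circ(\theta-\mathrm{id})$, where $M_u$ denotes multiplication by $u=1+2v$, and use $\theta^2=\mathrm{id}$ together with the intertwining relation $\theta\circ M_u=M_{\theta(u)}\circ\theta$ to obtain
\[
\Delta_{\theta}\theta+\theta\Delta_{\theta}=\bigl(M_u+M_{\theta(u)}\bigr)\circ(\mathrm{id}-\theta)=M_{u+\theta(u)}\circ(\mathrm{id}-\theta),
\]
which vanishes because $u+\theta(u)=(1+2v)+(3+2v)=0$ in $R$.

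The computations are genuinely routine; the only points demanding care are the mod-$4$ arithmetic that collapses $4v^2$ and $4v$ to $0$ (this is what makes $1+2v$ an involutory unit and forces the cross terms to cancel), and, in the operator-level argument, the fact that $\theta$ does not commute with $M_u$ but instead satisfies $\theta\circ M_u=M_{\theta(u)}\circ\theta$. Keeping track of $\theta(u)=3+2v$ correctly is the one place where a sign or coefficient slip would derail the cancellation, so I would treat that as the main thing to verify carefully.
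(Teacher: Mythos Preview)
Your argument is correct; the paper does not actually write out a proof of this lemma, merely remarking that it ``can be derived by a routine computation,'' and your explicit computation---in particular the clean formula $\Delta_{\theta}(a+bv)=b$ obtained from $(1+2v)(1-2v)=1$---is precisely such a routine verification. Both your coordinate check and your operator-level alternative for (1) are valid, so there is nothing to add.
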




Next, we have

\begin{lem}\label{L-sym}
For all $a \in R,$
we have $x^2a=ax^2.$
\end{lem}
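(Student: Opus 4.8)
The plan is to compute $x^2 a$ directly from the defining multiplication rule $xa = \theta(a)x + \Delta_\theta(a)$ by applying it twice, and then to show that every term of degree less than two in $x$ vanishes by virtue of Lemma \ref{Property D1} together with the fact that $\theta$ has order $2$.

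First I would write $x^2 a = x(xa) = x\bigl(\theta(a)x + \Delta_\theta(a)\bigr)$ and distribute. The subtle point is that the monomial $x\,\theta(a)$ cannot be left as is: it must itself be rewritten via the commutation rule as $x\,\theta(a) = \theta^2(a)\,x + \Delta_\theta(\theta(a))$, and likewise $x\,\Delta_\theta(a) = \theta(\Delta_\theta(a))\,x + \Delta_\theta(\Delta_\theta(a))$. Collecting terms by degree in $x$ then yields
\[
x^2 a = \theta^2(a)\,x^2 + \bigl(\Delta_\theta(\theta(a)) + \theta(\Delta_\theta(a))\bigr)x + \Delta_\theta(\Delta_\theta(a)).
\]

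Next I would simplify each coefficient. Since the order of $\theta$ is $2$, we have $\theta^2(a) = a$, so the leading term is exactly $a x^2$. The linear coefficient is $\Delta_\theta(\theta(a)) + \theta(\Delta_\theta(a)) = (\Delta_\theta \theta + \theta \Delta_\theta)(a)$, which is zero by part (1) of Lemma \ref{Property D1}. The constant term is $\Delta_\theta(\Delta_\theta(a)) = (\Delta_\theta \Delta_\theta)(a)$, which is zero by part (2). Hence $x^2 a = a x^2$, as desired.

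I do not expect a genuine obstacle here: the argument is a short, routine computation in the skew-polynomial ring. The only place to be careful is the bookkeeping in the non-commutative expansion---remembering that the intermediate factor $x\,\theta(a)$ must be pushed past the indeterminate before collecting like powers of $x$---and recognizing that the two identities of Lemma \ref{Property D1} are precisely what force the lower-order terms to disappear, so that $x^2$ ends up commuting with every scalar of $R$.
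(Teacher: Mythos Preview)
Your proof is correct and follows essentially the same route as the paper's: both expand $x^2a$ via two applications of the rule $xa=\theta(a)x+\Delta_\theta(a)$, collect the coefficients of $x^2$, $x$, and $1$, and then invoke $\theta^2=\mathrm{id}$ together with parts (1) and (2) of Lemma~\ref{Property D1} to kill the lower-order terms. There is nothing to add.
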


\begin{proof}
Since $xa=\theta(a)x+\Delta_{\theta}(a),$
\[
\begin{aligned}
	x^2a&=x\theta(a)x+x\Delta_{\theta}(a)\\
	    &=(\theta^2(a)x+\Delta_{\theta}(\theta(a)))x+\theta(\Delta_{\theta}(a))x+\Delta_\theta^2(a)\\
	    &=\theta^2(a)x^2+((\Delta_{\theta}\theta+\theta\Delta_{\theta})(a))x+\Delta_{\theta}^2(a)\\
	    &=ax^2 \hspace{1cm}\text{(by Lemma \ref{Property D1})}.
\end{aligned}
\]
\end{proof}

We can generalize Lemma \ref{L-sym} using mathematical induction.

\begin{cor}\label{C-sym}
For all $a \in R,~n \in \ZZ^+,$ we have
\[
x^na=
\begin{cases}
(\theta(a)x+\Delta_\theta(a))x^{n-1},&\text{if } n \text{ is odd},\\
ax^n,&\text{if }n \text{ is even}.
\end{cases}
\]
\end{cor}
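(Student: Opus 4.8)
The plan is to argue by induction on $n$ in steps of two, so that the parity of $n$ is preserved all along the induction and Lemma~\ref{L-sym} supplies the entire inductive engine. The decisive feature of Lemma~\ref{L-sym} is that $x^2$ commutes with every element of $R$ in the strong sense $x^2b=bx^2$ for all $b\in R$; this is exactly what will let me peel off a factor of $x^2$ at each step without disturbing the coefficients in $R$.

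For the two base cases I would simply unwind the definitions. When $n=1$ the multiplication rule gives $xa=\theta(a)x+\Delta_\theta(a)=(\theta(a)x+\Delta_\theta(a))x^{0}$, which is the odd case; when $n=2$ the identity $x^2a=ax^2$ is precisely Lemma~\ref{L-sym}, which is the even case. These anchor the induction, and since every positive integer is reached from $1$ or $2$ by repeatedly adding $2$, the two base cases together with a parity-preserving step will suffice.

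For the inductive step, assume the claimed formula holds for some $n\ge 1$ and write $x^{n+2}a=x^2\left(x^na\right)$. If $n$ is even, the inductive hypothesis gives $x^na=ax^n$, so $x^{n+2}a=x^2(ax^n)=(x^2a)x^n=(ax^2)x^n=ax^{n+2}$ by a single application of Lemma~\ref{L-sym}, which is the even case for $n+2$. If $n$ is odd, the inductive hypothesis gives $x^na=(\theta(a)x+\Delta_\theta(a))x^{n-1}$, and I would commute the leading $x^2$ through the degree-one factor by applying $x^2b=bx^2$ to the two coefficients $b=\theta(a)$ and $b=\Delta_\theta(a)$ separately:
\[
x^2(\theta(a)x+\Delta_\theta(a))=\theta(a)x^2x+\Delta_\theta(a)x^2=\theta(a)x^3+\Delta_\theta(a)x^2=(\theta(a)x+\Delta_\theta(a))x^2.
\]
Hence $x^{n+2}a=(\theta(a)x+\Delta_\theta(a))x^2\cdot x^{n-1}=(\theta(a)x+\Delta_\theta(a))x^{(n+2)-1}$, which is the odd case for $n+2$.

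I expect no serious obstacle here: once Lemma~\ref{L-sym} is available, the whole argument is bookkeeping. The only place demanding a moment's care is the odd step, where the factor of $x^2$ must be pushed through a genuine degree-one expression in $x$ rather than through a bare element of $R$; the key realization is that it is enough to commute $x^2$ past each $R$-coefficient individually, which is exactly what Lemma~\ref{L-sym} licenses.
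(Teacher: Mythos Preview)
Your proof is correct and rests on the same inductive engine as the paper's, namely Lemma~\ref{L-sym}. The only organizational difference is in the odd step: the paper uses strong induction and, for $k+1$ odd, writes $x^{k+1}a=x(x^ka)$ with $k$ even, applies the even-case hypothesis to get $x(ax^k)=(xa)x^k$, and then simply invokes the defining rule $xa=\theta(a)x+\Delta_\theta(a)$. You instead always peel off $x^2$, which in the odd case forces you to commute $x^2$ past the linear expression $\theta(a)x+\Delta_\theta(a)$ via two applications of Lemma~\ref{L-sym}. Both arguments are valid; the paper's odd step is marginally slicker since it avoids that extra commutation.
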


\begin{proof}
We know that $xa=\theta(a)x+\Delta_\theta(a)$ and $x^2a=ax^2.$  Suppose the above statement holds for all $n \leq k,$ with $k \geq 2.$  Consider two cases.
If $k+1$ is even, then $k-1$ is also even. From the induction hypothesis,
\[
x^{k+1}a=x^2(x^{k-1}a)=x^2 a x^{k-1}=(ax^2)x^{k-1}=ax^{k+1}.
\]
If $k+1$ is odd, then $k$ is even. Again, from the induction hypothesis,
\[
x^{k+1}a=x(x^k a)=x(a x^k)=(xa)x^k=(\theta(a)x+\Delta_\theta(a))x^k.
\]
Then the result follows.
\end{proof}



Let $R^\theta$ be a subset of $R,$ fixed element-wise by $\theta,$ namely $R^\theta:=\{a \in R:~\theta(a)=a\}.$
It is easy to verify that $R^\theta$ is a subring of $R.$  In our case, $R^\theta=\{0,1,2,3\}=\ZZ_4.$   Also, for all $a \in R^\theta,$
 we have $\Delta_{\theta}(a)=0.$  It implies, by Corollary \ref{C-sym}, that for all $a \in R^\theta$ and $n\in \ZZ^+$,
we have $x^n a=ax^n.$

\begin{definition}
	A polynomial $f(x) \in R[x;\theta,\Delta_\theta]$ is called a central element if it satisfies
	\[
	f(x)a(x)=a(x)f(x),
	\]
	for all $a(x) \in R[x;\theta,\Delta_\theta].$  The center of $R[x;\theta,\Delta_{\theta}],$ denoted by $Z(R[x;\theta,\Delta_{\theta}]),$ is defined as
	\[
	Z(R[x;\theta,\Delta_{\theta}]):=\{f(x)\in R[x;\theta,\Delta_{\theta}] :~f(x)a(x)=a(x)f(x),~\text{for all }a(x) \in R[x;\theta,\Delta_{\theta}]\}.
	\]
\end{definition}



The central element satisfies the following property.

\begin{theorem}\label{T-central}
	$f(x) \in Z(R[x;\theta,\Delta_\theta])$ if and only if $f(x) \in R^\theta[x]$ and for all odd integers $i,$ the coefficient of $x^i$ is equal to $0.$
\end{theorem}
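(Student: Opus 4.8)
The plan is to exploit the fact that $R[x;\theta,\Delta_{\theta}]$ is generated as a ring by the constants $R$ together with the indeterminate $x$. Since the centralizer of a generating set coincides with the centralizer of the ring it generates, a polynomial $f(x)$ lies in $Z(R[x;\theta,\Delta_{\theta}])$ if and only if $f(x)a=af(x)$ for every $a\in R$ and $f(x)x=xf(x)$. This reduces the whole statement to two families of commutation relations, which I would analyze coefficient by coefficient after writing $f(x)=\sum_i f_ix^i$ with $f_i\in R$.

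For the forward direction I would first impose commutation with constants. Using Corollary \ref{C-sym}, $x^ia=ax^i$ when $i$ is even and $x^ia=\theta(a)x^i+\Delta_{\theta}(a)x^{i-1}$ when $i$ is odd, so expanding $f(x)a$ and $af(x)$ and comparing the coefficient of $x^j$ gives, for odd $j$, the relation $af_j=f_j\theta(a)$ for all $a\in R$. Because $R$ is commutative this reads $f_j(\theta(a)-a)=0$; taking $a=v$ yields $\theta(v)-v=1+2v$, which is a unit of $R$, and hence $f_j=0$ for every odd $j$. Next I would impose commutation with $x$: comparing coefficients in $f(x)x=xf(x)$ produces $\Delta_{\theta}(f_0)=0$ together with $f_{j-1}=\theta(f_{j-1})+\Delta_{\theta}(f_j)$ for $j\ge1$. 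Feeding in the vanishing of the odd coefficients, each surviving relation collapses to $\Delta_{\theta}(f_j)=0$ for even $j$, which by Lemma \ref{Property D1}(3) is exactly $\theta(f_j)=f_j$, i.e. $f_j\in R^\theta$. Thus $f(x)\in R^\theta[x]$ with all odd-degree coefficients zero, as claimed.

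For the converse I would show directly that any $f(x)=\sum_{i\text{ even}}f_ix^i$ with $f_i\in R^\theta$ is central. By bilinearity it suffices to check that a single monomial $f_ix^i$ (with $i$ even and $f_i\in R^\theta$) commutes with an arbitrary monomial $ax^m$. Since $i$ is even, $x^ia=ax^i$; since $f_i\in R^\theta$ and every element of $R^\theta$ commutes with all powers of $x$ (as noted just before the definition of central element), $x^mf_i=f_ix^m$; combining these with the commutativity of $R$ shows both products equal $f_i a\,x^{i+m}$, so the two monomials commute.

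The main obstacle I anticipate is purely bookkeeping: the degree shift introduced by the $\Delta_{\theta}(a)x^{i-1}$ term when $i$ is odd means that the coefficient of $x^j$ in $f(x)a$ mixes $f_j$ with $f_{j+1}$, so the comparison must be organized carefully according to the parity of $j$. The conceptual key that makes everything fall out cleanly is the observation that $\theta(v)-v=1+2v$ is a unit of $R$, which forces the odd coefficients to vanish at once; once that is known, the relations coming from commuting with $x$ degenerate to the single condition $\Delta_{\theta}(f_j)=0$, and the description of the center follows from Lemma \ref{Property D1}(3).
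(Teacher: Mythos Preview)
Your proposal is correct and follows essentially the same approach as the paper: reduce centrality to commutation with the generators $R$ and $x$, and extract the conditions coefficient by coefficient using Corollary~\ref{C-sym} and Lemma~\ref{Property D1}. The only difference is the order of the two tests---you commute with constants first to kill the odd coefficients (via the odd-degree comparison and the observation that $\theta(v)-v=1+2v$ is a unit) and then with $x$ to force the even coefficients into $R^\theta$, whereas the paper commutes with $x$ first to get all $f_i\in R^\theta$ and then with $a=v$ (via the even-degree comparison and $\Delta_\theta(v)=1$) to kill the odd ones; the converse arguments are identical.
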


\begin{proof}
	($\Longrightarrow$) Let $f(x)=f_0+f_1x+f_2x^2+\cdots+f_kx^k.$  Observe that
	\[
	xf(x)=\sum_{i=0}^k (xf_i)x^i=\sum_{i=0}^k (\theta(f_i)x+\Delta_\theta(f_i))x^i=\Delta_{\theta}(f_0)
	+\sum_{i=1}^{k}\left(\theta(f_{i-1})+\Delta_{\theta}(f_i)\right)x^i
	+\theta(f_k)x^{k+1}
	\]
	and
	\[
	f(x)x=f_0x+f_1x^2+\cdots+f_kx^{k+1}.
	\]
	Since $f(x)$ is a central element, $xf(x)=f(x)x,$ which implies
	\begin{equation}\label{P1}
		\Delta_\theta(f_0)=0,
	\end{equation}
	\begin{equation}\label{P2}
		\theta(f_{i-1})+\Delta_\theta(f_i)=f_{i-1}, \text{ for }1 \leq i \leq k,
	\end{equation}
	\begin{equation}\label{P3}
		\theta(f_k)=f_k.
	\end{equation}

The Equation (\ref{P3}) implies $f_k \in R^\theta.$  By Lemma \ref{Property D1} we have $\Delta_{\theta}(f_k)=0.$  By substituting the Equation (\ref{P2}) repeatedly we obtain
$f_1,f_2,\ldots,f_{k-1} \in R^\theta.$  Moreover, by Equation (\ref{P1}) and Lemma \ref{Property D1} obtain $f_0 \in R^\theta.$  Thus, $f(x) \in R^\theta[x].$

Now, take $a=v \in R.$  Observe that
\[
af(x)=af_0+af_1x+ \cdots+af_k x^k
\text{ and }
f(x)a=f_0a + \sum_{i=1}^{k} f_i(x^ia).
\]
In this case, for $0 \leq 2j < k,$ the coefficient of $x^{2j}$ in $af(x)$ and $f(x)a$ is equal to $af_{2j}$ and $f_{2j}a+f_{2j+1}\Delta_{\theta}(a),$ respectively.  Since $f(x)$ is a central element and $\Delta_{\theta}(a)=1$, we conclude that $f_{2j+1}=0.$

\noindent
$(\Longleftarrow)$  Let $f(x)=f_0+f_2x^2+f_4x^4+\cdots+f_{2m}x^{2m} \in R^\theta[x].$  Let $a(x)=a_0+a_1x+a_2x^2+\cdots+a_kx^k \in R[x;\theta,\Delta_{\theta}].$  Observe that
\[
(f_{2i}x^{2i})(a_jx^j)=f_{2i}(x^{2i}a_j)x^j=f_{2i}a_jx^{2i+j},
\]
and
\[
\begin{aligned}
	(a_jx^j)(f_{2i}x^{2i})&= a_j(x^jf_{2i})x^{2i}\\
	&=\begin{cases}
		a_j\theta(f_{2i})x^{2i+j}+a_j\Delta_{\theta}(f_{2i})x^{2i+j-1}, &\text{if } j \text{ is odd},\\
		a_jf_{2i}x^{2i+j}, &\text{if } j \text{ is even,}
	\end{cases}
	\\
	&=a_jf_{2i}x^{2i+j} \text{ (by Lemma \ref{Property D1}).}
\end{aligned}
\]
Then $(f_{2i}x^{2i})(a_jx^j)=(a_jx^j)(f_{2i} x^{2i}),$ for all $i,j.$  Hence, $f(x)a(x)=a(x)f(x),$ for all $a(x) \in R[x;\theta,\Delta_{\theta}].$
\end{proof}


We end this section by establishing the right-division algorithm below.

\begin{lem}[Right-Division Algorithm]
	Let $f(x),g(x) \in R[x;\theta,\Delta_{\theta}]$ such that the leading coefficient of $g(x)$ is a unit.  Then there exist $q(x), r(x) \in R[x;\theta,\Delta_{\theta}]$ such that
	\[
	f(x)=q(x)g(x)+r(x),
	\]
	with $r(x)=0$ or $\deg r(x) < \deg g(x).$
\end{lem}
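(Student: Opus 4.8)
The plan is to prove the existence of $q(x)$ and $r(x)$ by strong induction on $\deg f(x)$, mirroring the classical polynomial division argument but accounting for the skew multiplication. If $f(x)=0$ or $\deg f(x) < \deg g(x)$, I would simply take $q(x)=0$ and $r(x)=f(x)$, which vacuously satisfies the required degree condition. Otherwise, write $m=\deg f(x)$ and $n=\deg g(x)$ with $m \geq n$, and let $f_m$ and $g_n$ denote the leading coefficients of $f(x)$ and $g(x)$, respectively.

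The key step is to construct a single monomial that cancels the leading term of $f(x)$. Since $g_n$ is a unit by hypothesis, I would form the monomial $t(x)=f_m\, g_n^{-1}\, x^{m-n}$ and consider the product $t(x)g(x)$. The crux is to verify that the leading term of $t(x)g(x)$ is exactly $f_m x^m$, so that $f(x)-t(x)g(x)$ has degree strictly less than $m$. Here is where the skew structure enters: by Corollary \ref{C-sym}, the term $x^{m-n}$ commutes past the coefficient $g_n$ in a controlled way, producing a leading coefficient of $x^{m-n} g_n$ equal to $g_n$ when $m-n$ is even and equal to $\theta(g_n)$ when $m-n$ is odd, with all other contributions landing in strictly lower-degree terms. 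Consequently the leading coefficient of $t(x)g(x)$ is $f_m g_n^{-1} g_n = f_m$ in the even case; in the odd case I must instead choose the multiplier as $f_m\,\theta(g_n)^{-1}\,x^{m-n}$ so that the leading coefficient becomes $f_m\,\theta(g_n)^{-1}\,\theta(g_n)=f_m$ (note $\theta(g_n)$ is again a unit since $\theta$ is an automorphism). Either way the difference $f_1(x):=f(x)-t(x)g(x)$ satisfies $\deg f_1(x) < m$.

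Applying the induction hypothesis to $f_1(x)$ yields $q_1(x), r(x)$ with $f_1(x)=q_1(x)g(x)+r(x)$ and $r(x)=0$ or $\deg r(x) < n$. Then setting $q(x)=t(x)+q_1(x)$ gives $f(x)=q(x)g(x)+r(x)$, completing the induction.

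The main obstacle I anticipate is the case distinction on the parity of $m-n$ forced by the noncommutativity recorded in Corollary \ref{C-sym}: one must be careful to pick the multiplier ($f_m g_n^{-1}$ versus $f_m \theta(g_n)^{-1}$) so that the leading term cancels exactly, and to confirm that the $\Delta_\theta$ contributions genuinely sit in lower degree and do not interfere with the leading coefficient. A clean way to sidestep the bookkeeping is to note that for any unit $u$ and any nonnegative integer $j$, the leading coefficient of $u^{-1}x^j g_n$ is a unit (being $g_n$ or $\theta(g_n)$), so one can always solve for a single leading-coefficient-killing monomial; this reduces the whole argument to the uniform statement that the degree strictly drops, after which the induction proceeds without further subtlety.
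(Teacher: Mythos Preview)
Your proposal is correct and is precisely the standard strong-induction argument for division in a skew polynomial ring; the paper itself does not give a proof here but simply refers to Theorem~2.8 of \cite{Sharma2018} and Theorem~1 of \cite{Patel2022}, both of which carry out exactly this argument. Your use of Corollary~\ref{C-sym} to split into the parity cases $m-n$ even versus odd is correct for this particular ring (where $\theta^2=\mathrm{id}$), and your observation that $\theta(g_n)$ is again a unit because $\theta$ is an automorphism is exactly what is needed in the odd case (the paper records this separately as Lemma~\ref{L-unit}). If anything, the parity split is slightly more work than necessary: in any skew polynomial ring one has $x^{j}a=\theta^{j}(a)x^{j}+(\text{terms of degree}<j)$, so taking $t(x)=f_m\,\theta^{m-n}(g_n)^{-1}x^{m-n}$ handles both cases uniformly, but your version is of course equivalent here.
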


\begin{proof} Similar to the proof of Theorem 2.8 in \cite{Sharma2018} and Theorem 1 in \cite{Patel2022}.
\end{proof}

\section{$\Delta_{\theta}$-cyclic codes over $R$}

For $f(x)$ a polynomial of degree $n$ in $R[x;\theta,\Delta_{\theta}],$ let
\[
\langle f(x) \rangle=\{a(x)f(x):~a(x) \in R[x;\theta,\Delta_{\theta}]\}.
\]
It is easy to see that $R[x;\theta,\Delta_{\theta}]/\langle f(x) \rangle$ is a left module over $R[x;\theta,\Delta_{\theta}],$ where the scalar multiplication is defined by
\[
r(x) \left( a(x)+\langle f(x)\rangle \right):=r(x)a(x)+\langle f(x)\rangle.
\]

\begin{definition}
	A code $C \seq R^n$ is called a $\Delta_{\theta}$-linear code of length $n$ over $R$ if $C$ is a left $R[x;\theta,\Delta_{\theta}]$-submodule of $R[x;\theta,\Delta_{\theta}]/\langle f(x) \rangle$
	for $f(x) \in R[x;\theta,\Delta_{\theta}]$ a polynomial of degree $n.$ If $f(x)$ is a central element, then $C$ is called a central $\Delta_{\theta}$-linear code.
\end{definition}

\begin{definition}
	A code $C \seq R^n$ is called a $\Delta_{\theta}$-cyclic code of length $n$ over $R$ if $C$ is a $\Delta_{\theta}$-linear code and for all $\cv=(c_0,c_1,\ldots,c_{n-1}) \in C$ we have
	\[
	T_{\Delta_{\theta}}(\cv):=(\theta(c_{n-1})+\Delta_{\theta}(c_0),\theta(c_0)+\Delta_{\theta}(c_1),\ldots,\theta(c_{n-2})+\Delta_{\theta}(c_{n-1})) \in C.
	\]
	Here, $T_{\Delta_{\theta}}$ is called a $\Delta_{\theta}$-cyclic shift operator.
\end{definition}

\begin{rem}
	If $\theta$ is the identity automorphism and $\Delta_{\theta}\equiv0,$ then we obtain a (usual) cyclic code.  Hence, the $\Delta_{\theta}$-cyclic code is a generalization of a cyclic code.  Moreover, recall that if $\theta$ is the identity automorphism and $\Delta_{\theta}\equiv 0,$ then a linear code $C \seq R^n$ is called quasi-cyclic of index $k$ ($k$ is a divisor of $n$) if for all $\cv \in C,$ we have $T^k_{\Delta_{\theta}}(\cv) \in C.$ Here $T^k_{\Delta_{\theta}}(\cv)=\underbrace{(T_{\Delta_{\theta}} \circ T_{\Delta_{\theta}} \circ \cdots \circ T_{\Delta_{\theta}})}_{k}(\cv),$ the composition of $k$ numbers of $\Delta_{\theta}$-cyclic shift operator.
\end{rem}

For our purpose, to convert the algebraic structures of $\Delta_{\theta}$-cyclic codes into combinatorial structures and vice versa, we consider the following correspondence:
\[
\begin{array}{ccc}
	R[x;\theta,\Delta_{\theta}]/\langle f(x) \rangle & \longrightarrow & R^n,\\
      c_0+c_1x+c_2x^2+\cdots+c_{n-1}x^{n-1} & \longmapsto  &
      (c_0,c_1,\ldots,c_{n-1}).   	
\end{array}
\]

From now on, let $R_{n,\Delta_{\theta}}:=R[x;\theta,\Delta_{\theta}]/\langle x^n-1\rangle.$

\begin{lem}\label{L-xc(x)identified}
	If $c(x)=c_0+c_1x+c_2x^2+\cdots+c_{n-1}x^{n-1} \in R_{n,\Delta_{\theta}}$ 
	is identified by a codeword $\cv=(c_0,c_1,\ldots,c_{n-1}) \in R^n,$ then $xc(x) \in R_{n,\Delta_{\theta}}$ is identified by $T_{\Delta_{\theta}}(\cv) \in R^n.$
\end{lem}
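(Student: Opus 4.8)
The plan is to compute $xc(x)$ directly inside $R_{n,\Delta_{\theta}}$ using the defining multiplication rule $xa=\theta(a)x+\Delta_{\theta}(a)$, and then read off its coefficient vector. First I would write $xc(x)=\sum_{i=0}^{n-1}(xc_i)x^i$ and apply the rule to each term, obtaining $xc_ix^i=\theta(c_i)x^{i+1}+\Delta_{\theta}(c_i)x^i$. Summing then splits cleanly into two pieces: a ``shifted'' part $\sum_{i=0}^{n-1}\theta(c_i)x^{i+1}$ and a ``derivation'' part $\sum_{i=0}^{n-1}\Delta_{\theta}(c_i)x^i$.

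Next I would reindex the shifted part by $j=i+1$, which turns it into $\sum_{j=1}^{n}\theta(c_{j-1})x^j$ and separates off the single top term $\theta(c_{n-1})x^n$. The key reduction step is to replace $x^n$ by $1$: since $\theta(c_{n-1})(x^n-1)$ lies in the left ideal $\langle x^n-1\rangle$ (it is $a(x)(x^n-1)$ with the ring constant $a(x)=\theta(c_{n-1})$), we have $\theta(c_{n-1})x^n\equiv\theta(c_{n-1})$ in $R_{n,\Delta_{\theta}}$, contributing $\theta(c_{n-1})$ to the constant term.

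Finally I would collect the coefficient of each $x^j$. The constant term becomes $\theta(c_{n-1})+\Delta_{\theta}(c_0)$, while for $1\le j\le n-1$ the coefficient of $x^j$ is $\theta(c_{j-1})+\Delta_{\theta}(c_j)$. Reading these off in order as a vector yields precisely
\[
\left(\theta(c_{n-1})+\Delta_{\theta}(c_0),\ \theta(c_0)+\Delta_{\theta}(c_1),\ \ldots,\ \theta(c_{n-2})+\Delta_{\theta}(c_{n-1})\right)=T_{\Delta_{\theta}}(\cv),
\]
which is the claimed identification.

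The computation is otherwise routine; the one point needing care---and the step I would flag as the main obstacle---is the reduction modulo the \emph{left} ideal $\langle x^n-1\rangle$ in the noncommutative ring. One must observe that $\theta(c_{n-1})x^n$ is the only term exceeding degree $n-1$, and that its reduction to $\theta(c_{n-1})$ is legitimate precisely because $x^n-1$ is multiplied on the left by a ring constant. In particular, since $x^n-1$ need not be central for odd $n$ (by Corollary \ref{C-sym}), one should not silently invoke commutativity here; it suffices that the offending term carries a constant left coefficient.
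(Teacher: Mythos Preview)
Your proposal is correct and follows essentially the same direct computation as the paper: expand $xc(x)$ via the rule $xa=\theta(a)x+\Delta_\theta(a)$, reduce the top term $\theta(c_{n-1})x^n$ to $\theta(c_{n-1})$, and read off the coefficient vector as $T_{\Delta_\theta}(\cv)$. Your treatment is in fact more careful than the paper's, which simply writes ``denoting that $c_{-1}=c_{n-1}$'' without commenting on the left-ideal reduction in the noncommutative setting.
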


\begin{proof}
	\[
	\begin{aligned}
		xc(x) &=x \left( \sum_{i=0}^{n-1} c_ix^i \right)\\
		      &= \sum_{i=1}^{n-1} \left(\theta(c_{i-1})+\Delta_{\theta}(c_i) \right) x^i + (\theta(c_{-1})+\Delta_{\theta}(c_0)) ~(\text{denoting that } c_{-1}=c_{n-1})\\
		      &=\sum_{i=0}^{n-1} (\theta(c_{i-1})+\Delta_{\theta}(c_i))x^i.
	\end{aligned}
	\]
	It means that $xc(x)$ is identified by $T_{\Delta_{\theta}}(\cv) \in R^n.$
\end{proof}

\begin{lem}\label{L-c-submodule}
	A code $C \seq R^n$ is $\Delta_{\theta}$-cyclic code if and only if $C$ is a left $R[x;\theta,\Delta_{\theta}]$-submodule of $R_{n,\Delta_{\theta}}.$
\end{lem}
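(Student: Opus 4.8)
The plan is to prove the two implications by using Lemma~\ref{L-xc(x)identified} as a dictionary between the combinatorial shift $T_{\Delta_{\theta}}$ on $R^n$ and left multiplication by $x$ in $R_{n,\Delta_{\theta}}$. Throughout I would identify a word $\cv=(c_0,\ldots,c_{n-1})$ with the class $c(x)=\sum_{i=0}^{n-1}c_ix^i \in R_{n,\Delta_{\theta}}$, so that the two descriptions of $C$ (as a set of vectors and as a set of polynomials) may be used interchangeably.

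For the direction $(\Leftarrow)$, I would assume $C$ is a left $R[x;\theta,\Delta_{\theta}]$-submodule of $R_{n,\Delta_{\theta}}$ and simply specialize the scalar to $x$: since $x \in R[x;\theta,\Delta_{\theta}]$, closure under the module action gives $xc(x)\in C$ for all $c(x)\in C$, and Lemma~\ref{L-xc(x)identified} reinterprets this as $T_{\Delta_{\theta}}(\cv)\in C$. As a submodule, $C$ is also closed under addition and under scalars from $R$, so it is in particular $\Delta_{\theta}$-linear; together these make $C$ a $\Delta_{\theta}$-cyclic code.

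For the direction $(\Rightarrow)$, I would start from a $\Delta_{\theta}$-cyclic code $C$, which by definition is closed under addition, under $R$-scalars, and under $T_{\Delta_{\theta}}$. Reading the shift-closure through Lemma~\ref{L-xc(x)identified} gives closure under left multiplication by $x$, and a one-line induction then yields $x^i c(x)\in C$ for every $i\ge 0$. The crux is to promote this to closure under an arbitrary $a(x)=\sum_i a_i x^i \in R[x;\theta,\Delta_{\theta}]$: I would write
\[
a(x)c(x)=\sum_i a_i\bigl(x^i c(x)\bigr),
\]
observe that each $x^i c(x)\in C$ by shift-closure, each $a_i\bigl(x^ic(x)\bigr)\in C$ by $R$-linearity, and the total lies in $C$ by additive closure, concluding that $C$ is a left $R[x;\theta,\Delta_{\theta}]$-submodule of $R_{n,\Delta_{\theta}}$.

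The argument becomes almost purely formal once Lemma~\ref{L-xc(x)identified} is available, so I do not anticipate a serious obstacle. The one step deserving care is the final decomposition $a(x)c(x)=\sum_i a_i\bigl(x^i c(x)\bigr)$: I must verify it is valid in the noncommutative ring $R_{n,\Delta_{\theta}}$ (it is, because the $a_i$ are constants acting on the left, which is exactly the scalar action the submodule axioms control), and I must keep straight the distinction between closure under the single operator $T_{\Delta_{\theta}}$ and closure under the whole skew-polynomial action---the former is what the cyclic hypothesis supplies directly, while the latter is what being a left submodule demands.
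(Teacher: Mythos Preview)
Your proposal is correct and follows essentially the same route as the paper: both directions hinge on Lemma~\ref{L-xc(x)identified} to translate between $T_{\Delta_\theta}$-closure and closure under left multiplication by $x$, then use induction and linearity to pass from $x^i c(x)\in C$ to $a(x)c(x)\in C$ for arbitrary $a(x)$. Your write-up is slightly more explicit about the decomposition $a(x)c(x)=\sum_i a_i\bigl(x^i c(x)\bigr)$ and the noncommutativity caveat, but the argument is the paper's.
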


\begin{proof}
	$(\Longrightarrow)$  Since $C$ is a $\Delta_{\theta}$-linear code, $(C,+)$ is a subgroup of $(R_{n,\Delta_{\theta}},+),$  and for any $\cv \in C,$ identified by $c(x) \in R_{n,\Delta_{\theta}},$ we have $T_{\Delta_{\theta}}(\cv) \in C.$ By Lemma \ref{L-xc(x)identified}, $T_{\Delta_{\theta}}(\cv) \in C$ can be identified by $xc(x) \in R_{n,\Delta_{\theta}}.$
	Inductively, we obtain $x^i c(x) \in C,$ for all $i \in \ZZ^+.$  By the linearity of the scalar multiplication, we have $a(x) c(x) \in C$ for all $a(x) \in R[x; \theta, \Delta_{\theta}].$ Hence, $C$ is a left submodule of $R_{n,\Delta_{\theta}}.$
	
	\noindent
	$(\Longleftarrow)$  If $C$ is a left submodule of $R_{n,\Delta_{\theta}},$ then for all $\cv \in C,$ identified by $c(x) \in R_{n,\Delta_{\theta}},$
	we have $xc(x) \in C.$  By Lemma \ref{L-xc(x)identified}, $xc(x)$ can be identified by $T_{\Delta_{\theta}}(\cv) \in R^n.$  Hence, $T_{\Delta_{\theta}}(\cv) \in C,$ which implies that $C$ is a $\Delta_\theta$-cyclic code.
\end{proof}

\begin{cor}
	If $C \seq R^n$ is a $\Delta_{\theta}$-cyclic code of even length $n,$ then $C$ is an ideal of $R_{n,\Delta_{\theta}}.$ 
\end{cor}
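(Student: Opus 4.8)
The plan is to promote the left-submodule description of $C$ coming from Lemma~\ref{L-c-submodule} to an ideal statement, and the whole argument hinges on the centrality criterion of Theorem~\ref{T-central}. The point is that for even $n$ the quotient $R_{n,\Delta_{\theta}}$ is a genuine quotient \emph{ring}, not merely a left module: once $x^n-1$ is shown to be a central element, the left ideal $\langle x^n-1\rangle$ is automatically two-sided, the quotient inherits a ring structure, and the left submodule $C$ is then an ideal of that ring.

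First I would record that $x^{n}-1$ lies in $R^{\theta}[x]=\ZZ_4[x]$: its only nonzero coefficients are the constant term $-1$ and the leading coefficient $1$, both of which are fixed by $\theta$. Next, because $n$ is even, these two nonzero coefficients occur in degrees $0$ and $n$, both even, so the coefficient of $x^{i}$ vanishes for every odd $i$. These are precisely the two hypotheses of Theorem~\ref{T-central}, so $x^{n}-1\in Z(R[x;\theta,\Delta_{\theta}])$. Centrality then yields, for all $a(x),b(x)\in R[x;\theta,\Delta_{\theta}]$, the identity $\bigl(a(x)(x^{n}-1)\bigr)b(x)=a(x)b(x)(x^{n}-1)$, which shows that $\langle x^{n}-1\rangle$ is closed under right multiplication as well; hence it is a two-sided ideal and $R_{n,\Delta_{\theta}}$ is a ring.

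Finally I would invoke Lemma~\ref{L-c-submodule}: since $C$ is a $\Delta_{\theta}$-cyclic code, it is a left $R[x;\theta,\Delta_{\theta}]$-submodule of $R_{n,\Delta_{\theta}}$. Because the $R[x;\theta,\Delta_{\theta}]$-action on this quotient factors through $R_{n,\Delta_{\theta}}$ itself, closure under left multiplication by $R[x;\theta,\Delta_{\theta}]$ is the same as closure under left multiplication by $R_{n,\Delta_{\theta}}$, and therefore $C$ is an ideal of the ring $R_{n,\Delta_{\theta}}$.

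The only real obstacle is the parity bookkeeping in the first two steps, and it is essential rather than cosmetic: for odd $n$ the monomial $x^{n}$ is an odd-degree term with nonzero coefficient, so Theorem~\ref{T-central} does not apply, $\langle x^{n}-1\rangle$ need not be two-sided, and $R_{n,\Delta_{\theta}}$ remains only a left module. I would also be careful to understand \emph{ideal} here in the one-sided (left-ideal of the quotient ring) sense, which is exactly the structure that the $\Delta_{\theta}$-cyclic hypothesis supplies, since that hypothesis only guarantees closure under left multiplication.
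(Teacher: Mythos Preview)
Your proposal is correct and follows essentially the same route as the paper: use Theorem~\ref{T-central} to see that $x^n-1$ is central when $n$ is even, conclude that $\langle x^n-1\rangle$ is two-sided so that $R_{n,\Delta_{\theta}}$ is a ring, and then apply Lemma~\ref{L-c-submodule} to identify the left submodule $C$ as a (left) ideal. Your extra care in spelling out why $x^n-1$ meets the hypotheses of Theorem~\ref{T-central} and your caveat about the one-sided meaning of ``ideal'' are welcome clarifications, but the underlying argument is the paper's own.
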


\begin{proof}
	Since $n$ is even, by Theorem \ref{T-central}, $x^n-1$ is a central element.  Then for all $a(x) \in R[x;\theta,\Delta_{\theta}],$ we have $a(x)(x^n-1)=(x^n-1)a(x).$  Then $\langle x^n-1\rangle$ is a two-sided ideal of $R[x;\theta,\Delta_{\theta}].$ Since $C$ is a left submodule of $R_{n,\Delta_{\theta}},$ $C$ is an ideal of $R_{n,\Delta_{\theta}}.$
\end{proof}

\begin{lem}\label{L-c-qc}
	Let $C \seq R^n$ be a $\Delta_{\theta}$-cyclic code.  Then the following two statements hold.
	\begin{itemize}
		\item[(1)] $C$ is a cyclic code of length $n$ over $R$ if $n$ is odd;
		\item[(2)] $C$ is a quasi-cyclic code of length $n$ and index $2$ over $R$ if $n$ is even.
	\end{itemize}
	\end{lem}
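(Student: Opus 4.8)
The plan is to reduce everything to a single computation, namely that iterating the $\Delta_{\theta}$-cyclic shift twice reproduces the \emph{ordinary} double cyclic shift $T^2$, where $T(c_0,\dots,c_{n-1})=(c_{n-1},c_0,\dots,c_{n-2})$. First I would record that, since $C$ is $\Delta_{\theta}$-cyclic, the defining property gives $T_{\Delta_{\theta}}(\cv)\in C$ for every $\cv\in C$, and hence $T_{\Delta_{\theta}}^m(\cv)\in C$ for all $m\ge 0$ by induction.

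The key step is to show $T_{\Delta_{\theta}}^2=T^2$ as operators on $R^n$. Writing $d_j=\theta(c_{j-1})+\Delta_{\theta}(c_j)$ for the coordinates of $T_{\Delta_{\theta}}(\cv)$ (indices read modulo $n$, with $c_{-1}=c_{n-1}$), the $j$-th coordinate of $T_{\Delta_{\theta}}^2(\cv)$ is $\theta(d_{j-1})+\Delta_{\theta}(d_j)=\theta^2(c_{j-2})+(\theta\Delta_{\theta}+\Delta_{\theta}\theta)(c_{j-1})+\Delta_{\theta}^2(c_j)$. Since $\theta$ has order $2$ and, by parts (1) and (2) of Lemma \ref{Property D1}, $\Delta_{\theta}\theta+\theta\Delta_{\theta}\equiv 0$ and $\Delta_{\theta}^2\equiv 0$, the last two terms vanish and $\theta^2(c_{j-2})=c_{j-2}$; thus this coordinate equals $c_{j-2}$, which is exactly the $j$-th coordinate of $T^2(\cv)$.

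With this in hand, part (2) is immediate: if $n$ is even, then $T^2(\cv)=T_{\Delta_{\theta}}^2(\cv)\in C$ for every $\cv\in C$, and since $2\mid n$, this says precisely that $C$ is quasi-cyclic of index $2$. For part (1), when $n$ is odd I would use that $2$ is invertible modulo $n$: choose $k$ with $2k\equiv 1\pmod n$ (for instance $k=(n+1)/2$). Since the ordinary shift satisfies $T^n=\mathrm{id}$, we get $T=T^{2k}=(T^2)^k=T_{\Delta_{\theta}}^{2k}$, and therefore $T(\cv)=T_{\Delta_{\theta}}^{2k}(\cv)\in C$ for all $\cv\in C$, which is the statement that $C$ is cyclic.

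The main obstacle to anticipate is the odd case, and it is precisely the reason the $\Delta_{\theta}$-shift does not simply collapse to the ordinary shift: for odd $n$ the polynomial $x^n-1$ fails to be central (Theorem \ref{T-central}), so $T_{\Delta_{\theta}}$ itself need not preserve the \emph{ordinary} cyclic structure. The resolution is the observation above that its \emph{square} does, combined with the arithmetic fact that for odd $n$ the even shifts $T^2,T^4,\dots$ already exhaust the cyclic group generated by $T$. I would double-check only that all index arithmetic is taken modulo $n$ and that the wrap-around terms are handled consistently, but no genuinely new estimate is required.
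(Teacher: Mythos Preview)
Your proof is correct and is essentially the paper's argument recast in vector notation: the paper proves the same identity $T_{\Delta_{\theta}}^2=T^2$ via the polynomial side (Lemma~\ref{L-xc(x)identified} together with Corollary~\ref{C-sym}, itself a consequence of Lemma~\ref{Property D1}), and then, for odd $n$, takes the same $k=(n+1)/2$ so that $x^{2k}c(x)$ corresponds to the ordinary cyclic shift. The only difference is presentational---you compute $T_{\Delta_{\theta}}^2$ coordinate-wise from Lemma~\ref{Property D1} directly, whereas the paper multiplies by $x^{2b}$ in $R_{n,\Delta_{\theta}}$---but the underlying computation and the use of $2k\equiv 1\pmod n$ are identical.
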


\begin{proof}
	\begin{itemize}
		\item[(1)]  If $n$ is odd, then there exists $b \in \ZZ$ such that $2b=n+1.$  Let $\cv=(c_0,c_1,\ldots,c_{n-1}) \in C$ be a codeword identified by $c(x)=c_0+c_1x+c_2x^2+\cdots+c_{n-1}x^{n-1} \in R_{n,\Delta_{\theta}}.$
		It is clear that $x^{2b}c(x) \in C.$  Observe that
		\[
		x^{2b}c(x)=x^{2b}\sum_{i=0}^{n-1} c_ix^i=\sum_{i=0}^{n-1} (x^{2b} c_i)x^i=\sum_{i=0}^{n-1} c_ix^{n+1+i}=\sum_{i=0}^{n-1} c_{i-1}x^i ~(\text{denoting that } c_{-1}=c_{n-1}).
		\]
		Note that $x^{2b}c_i=c_ix^{2b}$ is derived from Corollary \ref{C-sym}.  The equation above says that the codeword $x^{2b}c(x) \in C$ can be identified by the vector $(c_{n-1},c_0,c_1,\ldots,c_{n-2})\in C.$  Thus, $C$ is a cyclic code.
		
		\item[(2)]  Observe that for every vector $\cv=(c_0,c_1,\ldots,c_{n-1})\in C$ identified by the polynomial $c(x)=c_0+c_1x+c_2x^2+\cdots+c_{n-1}x^{n-1} \in
		R_{n,\Delta_{\theta}}$
		we have $x^2c(x) \in C.$  By the similar way as in part (1), we can show that $x^2c(x)$ can be identified by $(c_{n-2},c_{n-1},c_0,\ldots,c_{n-3}) \in C.$  Then $C$ is a quasi-cyclic code of index $2.$
	\end{itemize}
\end{proof}

\begin{lem}
	If $C \seq R^n$ is a $\Delta_{\theta}$-cyclic code and $g(x)$ is a nonzero polynomial in $C$ of smallest degree with leading coefficient is a unit, then the following three statements hold.
	\begin{itemize}
		\item[(1)]  $C=\langle g(x) \rangle.$
		
		\item[(2)]  $g(x)$ is a right divisor of $x^n-1.$
		
		\item[(3)]  $\{g(x),xg(x),\ldots,x^{n-k-1}g(x)\}$ is a basis of $C,$ with $k=\deg(g(x)).$
	\end{itemize}\label{L-3cond}
\end{lem}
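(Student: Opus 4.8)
The plan is to run the classical generator-polynomial argument for cyclic codes, with the right-division algorithm and the minimality of $\deg g(x)$ as the two engines, while checking at each step that the skew multiplication $xa=\theta(a)x+\Delta_{\theta}(a)$ preserves degrees and keeps unit leading coefficients units. The last point is what makes everything work: if $u$ denotes the (unit) leading coefficient of $g(x)$, then the leading term of $x^{i}g(x)$ is $\theta^{i}(u)x^{k+i}$, and since $\theta$ is an automorphism it sends units to units, so each $\theta^{i}(u)$ is again a unit. More generally, for a product $q(x)g(x)$ the top coefficient is $\theta^{\deg q}(\mathrm{lc}(q))\,u$, which is nonzero; hence $\deg(q(x)g(x))=\deg q+\deg g$ with no degree collapse, and the leading coefficient stays a unit.

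Parts (1) and (2) both follow from the right-division algorithm combined with minimality of $g(x)$. For (1), the inclusion $\langle g(x)\rangle\seq C$ is immediate since $C$ is a left submodule containing $g(x)$; conversely, for $c(x)\in C$ (a representative of degree $<n$) write $c(x)=q(x)g(x)+r(x)$ with $r(x)=0$ or $\deg r(x)<k$, the algorithm applying because $\mathrm{lc}(g)$ is a unit. Then $r(x)=c(x)-q(x)g(x)$ reduces to an element of $C$ of degree $<k<n$, hence is its own representative in $R_{n,\Delta_{\theta}}$, and minimality forces $r(x)=0$, giving $c(x)=q(x)g(x)\in\langle g(x)\rangle$. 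For (2), divide $x^n-1=q(x)g(x)+r(x)$ the same way; modulo $x^n-1$ we have $r(x)\equiv -q(x)g(x)\in C=\langle g(x)\rangle$, and since $\deg r(x)<k<n$ the identical minimality argument yields $r(x)=0$, i.e. $x^n-1=q(x)g(x)$, so $g(x)$ right-divides $x^n-1$.

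For part (3), spanning is a refinement of part (1): every $c(x)\in C$ equals $q(x)g(x)$ as polynomials, and since $\deg c(x)<n$ the degree bookkeeping above gives $\deg q(x)\le n-k-1$; writing $q(x)=\sum_{j=0}^{n-k-1}q_jx^j$ and using associativity with the constants $q_j$, we get $c(x)=\sum_{j=0}^{n-k-1}q_j\,(x^jg(x))$, an $R$-linear combination of the claimed generators. For independence, suppose $\sum_{j=0}^{n-k-1}\lambda_j\,(x^jg(x))=0$ in $R_{n,\Delta_{\theta}}$ with $\lambda_j\in R$. The summand $\lambda_j\,(x^jg(x))$ has degree $k+j$ with leading coefficient $\lambda_j\theta^{j}(u)$, and these degrees $k,k+1,\ldots,n-1$ are distinct and all strictly below $n$, so no reduction modulo $x^n-1$ occurs and the left side is a genuine polynomial of degree $<n$. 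Taking the largest index $j_0$ with $\lambda_{j_0}\neq 0$, the coefficient of $x^{k+j_0}$ equals $\lambda_{j_0}\theta^{j_0}(u)$; as $\theta^{j_0}(u)$ is a unit this forces $\lambda_{j_0}=0$, a contradiction. Hence all $\lambda_j=0$, and the set is an $R$-basis of $C$ of size $n-k$.

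I expect the main obstacle to be bookkeeping rather than a genuine difficulty: one must make sure the leading coefficients remain units after left-multiplication by powers of $x$ (so that the division algorithm applies and the degrees $k,k+1,\ldots,n-1$ are all actually attained), and that no wrap-around modulo $x^n-1$ contaminates the degree argument in part (3). Once these non-commutative subtleties are handled, each part reduces to the familiar minimal-degree/division reasoning.
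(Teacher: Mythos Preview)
Your proof is correct and follows essentially the same route as the paper's: right-division plus minimality of $\deg g$ for (1) and (2), then degree bookkeeping for the basis in (3). Your spanning argument in (3) is in fact a bit cleaner than the paper's case split (the paper divides an arbitrary $\ell(x)$ with $c(x)=\ell(x)g(x)$ by $h(x)$ when $\deg\ell>n-k-1$), since you reuse the specific $q(x)$ produced in (1), which already has $\deg q\le n-k-1$; note only the harmless slip that the leading coefficient of $q(x)g(x)$ is $\mathrm{lc}(q)\,\theta^{\deg q}(u)$ rather than $\theta^{\deg q}(\mathrm{lc}(q))\,u$.
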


\begin{proof}
	\begin{itemize}
		\item[(1)]  It is clear that $\langle g(x) \rangle \seq C.$ Let $c(x) \in C.$  By the right-division algorithm, there exist $q(x),r(x) \in R[x;\theta,\Delta_{\theta}]$ such that
		\[
		c(x)=q(x)g(x)+r(x),
		\]
		with $r(x)=0$ or $\deg(r(x))<\deg(g(x)).$  Since $C$ is a left submodule of $R_{n,\Delta_{\theta}}$ over $R[x;\theta,\Delta_{\theta}],$ $r(x)=c(x)-q(x)g(x) \in C.$  Moreover, since $g(x)$ is a nonzero polynomial of smallest degree, $r(x)=0.$  Thus, $c(x)=q(x)g(x) \in \langle g(x)\rangle,$ and hence $C \seq \langle g(x)\rangle.$
		
		\item[(2)]  Again, by right-division algorithm, there exist $q(x),r(x) \in R[x;\theta,\Delta_{\theta}]$ such that
		\[
		x^n-1=q(x)g(x)+r(x),
		\]
		with $r(x)=0$ or $\deg(r(x))<\deg(g(x)).$  By the same argument with above, we have $x^n-1=q(x)g(x).$ It means that $g(x)$ is is a right divisor of $x^n-1.$
		
		\item[(3)]  From the above result, let $x^n-1=h(x)g(x),$ for some $h(x) \in R[x;\theta,\Delta_{\theta}].$  Let $c(x) \in C=\langle g(x) \rangle$ and let $c(x)= \ell(x) g(x)$ for some $\ell(x) \in R[x;\theta,\Delta_{\theta}].$  If $\deg(\ell(x)) \leq n-k-1,$ then $c(x) \in \langle g(x),xg(x),\ldots,x^{n-k-1}g(x) \rangle.$  If $\deg(\ell(x)) > n-k-1,$
		by using the right-division algorithm we come up with the conclusion that $c(x)=\ell(x)g(x)=r(x)g(x)$ (in $R_{n,\Delta_{\theta}}$) for some polynomial $r(x)\in R[x;\theta,\Delta_{\theta}]$ with $r(x)=0$ or $\deg(r(x))<\deg(h(x))=n-k$, so $\{g(x),xg(x),\ldots,x^{n-k-1}g(x)\}$  generates $C.$  Moreover, it is easy to see that the set is also linearly independent.  Hence, we conclude that  $\{g(x),xg(x),\ldots,x^{n-k-1}g(x)\}$ is a basis for $C.$
	\end{itemize}
\end{proof}

Since $\langle g(x) \rangle$ is a submodule of $R_{n,\Delta_{\theta}}$ over $R[x;\theta,\Delta_{\theta}],$ by Lemma \ref{L-c-submodule}, $C$ is a $\Delta_{\theta}$-cyclic code.  The Lemma \ref{L-3cond} above says that if $C=\langle g(x) \rangle$ is a $\Delta_{\theta}$-cyclic code, where $g(x)$ is a nonzero polynomial in $C$ of smallest degree with a unit leading coefficient and $g(x)$ is also a right divisor of $x^n-1,$ then $C$ is free. Hence, we obtained a construction method for $\Delta_{\theta}$-cyclic codes as described by the following.

\begin{cor}\label{Construction}
	If $g(x)$ is a right divisor of $x^n-1,$ with leading coefficient a unit, then $C=\langle g(x)\rangle$ is a free $\Delta_{\theta}$-cyclic code.
\end{cor}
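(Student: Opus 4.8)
The plan is to reduce the statement to Lemma \ref{L-c-submodule} and Lemma \ref{L-3cond} by verifying their hypotheses for $C=\langle g(x)\rangle$. First I would write $x^n-1=h(x)g(x)$ for some $h(x)\in R[x;\theta,\Delta_\theta]$, which is possible precisely because $g(x)$ is a right divisor of $x^n-1$. This gives $\langle x^n-1\rangle\seq\langle g(x)\rangle$, so $\langle g(x)\rangle$ descends to a well-defined left $R[x;\theta,\Delta_\theta]$-submodule of the quotient $R_{n,\Delta_\theta}$. By Lemma \ref{L-c-submodule}, this already shows that $C=\langle g(x)\rangle$ is a $\Delta_\theta$-cyclic code; it then remains only to prove that it is free.

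To invoke Lemma \ref{L-3cond} I must check that $g(x)$ is a nonzero polynomial of smallest degree in $C$ with unit leading coefficient. This rests on a degree computation in $R[x;\theta,\Delta_\theta]$. For a nonzero $a(x)=\sum_{i=0}^{m} a_ix^i$ with $a_m\neq0$, the relation $xa=\theta(a)x+\Delta_\theta(a)$ gives $x^m g_k=\theta^m(g_k)x^m+(\text{lower order terms})$, where $g_k$ denotes the leading coefficient of $g(x)$ and $k=\deg g(x)$. Hence the leading coefficient of $a(x)g(x)$ is $a_m\theta^m(g_k)$. Since $\theta$ is an automorphism, $\theta^m(g_k)$ is again a unit, so $a_m\theta^m(g_k)\neq0$ whenever $a_m\neq0$; consequently $\deg\bigl(a(x)g(x)\bigr)=\deg a(x)+k$, and in particular every nonzero element of the left ideal $\langle g(x)\rangle\seq R[x;\theta,\Delta_\theta]$ has degree at least $k$. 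The same argument applied to $x^n-1$ (whose leading coefficient is the unit $1$) shows that every nonzero element of $\langle x^n-1\rangle$ has degree at least $n$, so distinct polynomials of degree $<n$ represent distinct classes in $R_{n,\Delta_\theta}$. Combining these two facts, the minimal degree attained by a nonzero element of $C$ is exactly $k$, and it is realized by $g(x)$ itself, whose leading coefficient is a unit by hypothesis.

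With both hypotheses of Lemma \ref{L-3cond} now in place, part (3) of that lemma yields that $\{g(x),xg(x),\ldots,x^{n-k-1}g(x)\}$ is a basis of $C$; in particular, $C$ is a free $\Delta_\theta$-cyclic code, as claimed.

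I expect the degree argument of the second paragraph to be the crux: it is exactly the point where the unit leading coefficient of $g(x)$ is indispensable. Because $R$ has zero divisors, without this hypothesis the product $a(x)g(x)$ could drop in degree, $g(x)$ might fail to be a minimal-degree element of $C$, and the basis supplied by Lemma \ref{L-3cond} could collapse. Everything else is a routine reduction to the two earlier lemmas.
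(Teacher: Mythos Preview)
Your proposal is correct and follows essentially the same route as the paper: reduce to Lemma~\ref{L-c-submodule} to get that $C$ is $\Delta_\theta$-cyclic, and to Lemma~\ref{L-3cond}(3) to get freeness. The paper's justification is the short paragraph preceding the corollary, which invokes these two lemmas without further comment; your degree computation showing $\deg(a(x)g(x))=\deg a(x)+\deg g(x)$ (hence that $g(x)$ really is a minimal-degree element of $C$ with unit leading coefficient) fills in a hypothesis check that the paper leaves implicit.
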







 Let $C= \langle g(x) \rangle$ be a free  $\Delta_{\theta}$-cyclic code of length $n$ generated by a right divisor $g(x)$ of $x^n-1,$ whose leading coefficient is a unit.  Then the generator matrix $G$ of $C$ of dimension $(n-k) \times n$ is given by
\[
G=\begin{pmatrix}
	g(x)\\xg(x)\\x^2 g(x)\\ \vdots\\x^{n-k-1} g(x)
\end{pmatrix},
\]
where $g(x)=g_0+g_1x+g_2x^2+\cdots+g_kx^k.$  To be more precise, if $n-k$ is odd, then
\[
G=\left(
\begin{smallmatrix}
	g_0 & g_1 & g_2 & \cdots & g_k & 0 & \cdots & 0\\
	\Delta_{\theta}(g_0) & \theta(g_0)+\Delta_{\theta}(g_1) & \theta(g_1)+\Delta_{\theta}(g_2) & \cdots & \theta(g_{k-1})+\Delta_{\theta}(g_k) & \theta(g_k) & \cdots & 0\\
	0 & 0 & g_0 & \cdots & g_{k-3} & g_{k-2} & \cdots & 0\\
	\vdots & \vdots & \vdots & \cdots & \vdots &  &  & \vdots\\
	0 & 0 & \cdots & g_0 & g_{k-2} & \cdots & g_{k-1} & g_k
\end{smallmatrix}
\right),
\]

and if $n-k$ is even, then

\[
G =\left(
\begin{smallmatrix}
	g_0 & g_1 & g_2 & \cdots & g_k & 0 & \cdots & 0 & 0\\
	\Delta_{\theta}(g_0) & \theta(g_0)+\Delta_{\theta}(g_1) & \theta(g_1)+\Delta_{\theta}(g_2) & \cdots & \theta(g_{k-1})+\Delta_{\theta}(g_k) & \theta(g_k) & \cdots & 0 & 0\\
	0 & 0 & g_0 & \cdots & g_{k-3} & g_{k-2} & \cdots & 0 & 0\\
	\vdots & \vdots & \vdots & \cdots & \vdots & \vdots & \vdots & \vdots &  \vdots\\
	0 & 0 & \cdots & \Delta_{\theta}(g_0) & \theta(g_0)+\Delta_{\theta}(g_1) & \theta(g_1)+\Delta_{\theta}(g_2) & \cdots & \theta(g_{k-1})+\Delta_{\theta}(g_k) & \theta(g_k)\\
	\end{smallmatrix}
\right).
\]



\section{Dual of $\Delta_{\theta}$-cyclic codes over $R$}

In this section, we investigate the structure of the dual of a free $\Delta_{\theta}$-cyclic code over $R$ with even length.


\begin{definition}
	Let $C \seq R^n$ is a $\Delta_{\theta}$-cyclic code.  Dual of $C,$ denoted by $C^\perp,$ is defined by
	\[
	C^\perp=\{\xv \in R^n:~\xv \cdot \yv=0, \text{ for all }\yv \in C\}.
	\]
	Here, $\xv \cdot \yv=x_0y_0+x_1y_1+\cdots+x_{n-1}y_{n-1}$ denotes the usual inner product in $R^n.$
\end{definition}

It is easy to check that $C^\perp$ is a linear code over $R.$  Moreover, for any even $n,$ if $C$ is a free $\Delta_{\theta}$-cyclic code principally generated by a polynomial whose leading coefficient is a unit, then $C^\perp$ is free, as we show below.

\begin{lem}
	Let $n$ be an even integer. Let $g(x),h(x) \in R[x;\theta,\Delta_{\theta}],$ where the leading coefficient of $g(x)$ is a unit, and $h(x)g(x)=x^n-1.$ Then we have
	\[
	h(x)g(x)=g(x)h(x).
	\]
\end{lem}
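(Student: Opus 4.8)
The plan is to exploit the centrality of $x^n-1$ together with a right-cancellation property in $R[x;\theta,\Delta_{\theta}]$. Since $n$ is even, the polynomial $x^n-1$ has coefficients $-1$ and $1$, both lying in $R^\theta=\ZZ_4$, and its only nonzero coefficients occur in the even degrees $0$ and $n$. Hence, by Theorem~\ref{T-central}, $x^n-1$ is a central element, so in particular $g(x)(x^n-1)=(x^n-1)g(x)$. This single commutation identity is the engine of the whole argument.

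Next I would substitute the hypothesis $h(x)g(x)=x^n-1$ into this identity. Using associativity on the left, $g(x)(x^n-1)=g(x)\bigl(h(x)g(x)\bigr)=\bigl(g(x)h(x)\bigr)g(x)$, whereas by centrality this same quantity equals $(x^n-1)g(x)=\bigl(h(x)g(x)\bigr)g(x)$. Comparing the two expressions gives $\bigl(g(x)h(x)\bigr)g(x)=\bigl(h(x)g(x)\bigr)g(x)$, that is, $\bigl(g(x)h(x)-h(x)g(x)\bigr)g(x)=0$. It now suffices to cancel $g(x)$ on the right.

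The cancellation step is the crux, and I would isolate it as an auxiliary claim: if $p(x)g(x)=0$ and the leading coefficient of $g(x)$ is a unit, then $p(x)=0$. To see this I would show that degrees add under right multiplication by $g(x)$. If $p(x)$ has leading term $p_mx^m$ with $p_m\neq 0$ and $g(x)$ has leading term $g_kx^k$ with $g_k$ a unit, then expanding $x^m g_k$ via $xa=\theta(a)x+\Delta_{\theta}(a)$ shows the top-degree term of $p(x)g(x)$ is $p_m\theta^m(g_k)x^{m+k}$, all cross terms $p_ix^ig_jx^j$ with $(i,j)\neq(m,k)$ having strictly smaller degree. Because $\theta$ is an automorphism, $\theta^m(g_k)$ is again a unit, so $p_m\theta^m(g_k)=0$ forces $p_m=0$, contradicting $p_m\neq0$; hence $\deg\bigl(p(x)g(x)\bigr)=\deg p(x)+\deg g(x)$ and the product cannot vanish unless $p(x)=0$. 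Applying this with $p(x)=g(x)h(x)-h(x)g(x)$ yields $g(x)h(x)=h(x)g(x)$, which is the desired conclusion.

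I expect this degree-additivity (equivalently, right-cancellation) lemma to be the only genuine obstacle: one must verify carefully that in the skew setting no lower-degree terms arising from the twisting by $\theta$ and $\Delta_{\theta}$ can interfere with the leading term. This is precisely where both hypotheses are consumed — the unit leading coefficient of $g(x)$ and the invertibility of $\theta^m$ guarantee the leading coefficient of the product is nonzero. Everything else reduces to the short manipulation above built on the centrality furnished by Theorem~\ref{T-central}.
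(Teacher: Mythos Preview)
Your argument is correct and follows essentially the same line as the paper: use Theorem~\ref{T-central} to see that $x^n-1$ is central, rewrite this as an equation with a common factor, and then cancel. The only cosmetic difference is that the paper multiplies on the left by $h(x)$ and cancels $h(x)$ (asserting ``$h(x)$ is not a zero divisor''), whereas you multiply on the right by $g(x)$ and cancel $g(x)$; your choice has the small advantage that the cancellation step uses directly the stated hypothesis on the leading coefficient of $g(x)$, and you justify it carefully via degree additivity.
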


\begin{proof}
	If $n$ is even, then $x^n-1=h(x)g(x)$ is a central element (by Theorem \ref{T-central}).  Then we have
	\[
	h(x)h(x)g(x)=h(x)g(x)h(x).
	\]
	Since $h(x)$ is not a zero divisor, $x^n-1=h(x)g(x)=g(x)h(x).$
\end{proof}


\begin{lem} \label{L-ch=0}
	Let $C$ be a $\Delta_{\theta}$-cyclic code, where $C=\langle g(x)\rangle,$ for some right divisor $g(x)$ of $x^n-1,$ whose leading coefficient is a unit and $n$ is even.  Let $x^n-1=h(x)g(x).$  Then $c(x) \in R_{n,\Delta_{\theta}}$ is contained in $C$ if and only if $c(x)h(x)=0$ (in $R_{n,\Delta_{\theta}}$).
\end{lem}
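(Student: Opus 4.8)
The plan is to prove the two implications separately; the forward one is immediate, while the reverse one carries all the weight. Throughout I will rely on the preceding lemma, which for even $n$ yields the symmetric factorization $g(x)h(x)=h(x)g(x)=x^n-1$, together with Theorem~\ref{T-central}, which guarantees that $x^n-1$ is central and hence that $\langle x^n-1\rangle$ is a two-sided ideal. The latter is what lets me pass freely between the quotient $R_{n,\Delta_{\theta}}$ and honest equalities in $R[x;\theta,\Delta_{\theta}]$ regardless of the side on which I multiply.

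For the forward implication, suppose $c(x)\in C=\langle g(x)\rangle$, say $c(x)=a(x)g(x)$ for some $a(x)\in R[x;\theta,\Delta_{\theta}]$. Then
\[
c(x)h(x)=a(x)\,g(x)h(x)=a(x)(x^n-1),
\]
which is $0$ in $R_{n,\Delta_{\theta}}$. This direction uses only the factorization $g(x)h(x)=x^n-1$.

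For the reverse implication, suppose $c(x)h(x)=0$ in $R_{n,\Delta_{\theta}}$. Because $\langle x^n-1\rangle$ is two-sided, this means $c(x)h(x)=b(x)(x^n-1)$ for some $b(x)\in R[x;\theta,\Delta_{\theta}]$, an honest equality in the skew polynomial ring. Substituting $x^n-1=g(x)h(x)$ and rearranging gives
\[
\bigl(c(x)-b(x)g(x)\bigr)h(x)=0
\]
in $R[x;\theta,\Delta_{\theta}]$. If I can cancel $h(x)$ from the right, I conclude $c(x)=b(x)g(x)\in\langle g(x)\rangle=C$, which finishes the proof.

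The main obstacle is precisely this right-cancellation of $h(x)$, the only genuinely non-routine step. First I would verify that the leading coefficient of $h(x)$ is a unit: writing $m=\deg h(x)$, $k=\deg g(x)$, and using Corollary~\ref{C-sym} to move $x^m$ past the leading coefficient $g_k$ of $g(x)$, the leading coefficient of the product $h(x)g(x)$ works out to $h_m\,\theta^{m}(g_k)$, where $h_m$ is the leading coefficient of $h(x)$. Since $h(x)g(x)=x^n-1$ has leading coefficient $1$ and $\theta^{m}(g_k)$ is a unit ($g_k$ being a unit and $\theta$ an automorphism), $h_m$ must also be a unit. Consequently $h(x)$ is not a right zero-divisor: if some $p(x)\neq 0$ of degree $d$ satisfied $p(x)h(x)=0$, the leading coefficient of $p(x)h(x)$ would be $p_d\,\theta^{d}(h_m)$, a nonzero element times a unit, hence nonzero --- contradicting $p(x)h(x)=0$. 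Applying this with $p(x)=c(x)-b(x)g(x)$ forces $c(x)=b(x)g(x)\in C$. The single elementary fact underpinning both the unit conclusion and the cancellation is that in $R$ the product of a nonzero element with a unit is never $0$.
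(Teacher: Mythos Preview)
Your proof is correct and follows essentially the same route as the paper: both directions match, and in the reverse implication you (like the paper) pass to an equality in $R[x;\theta,\Delta_{\theta}]$, use the symmetric factorization $g(x)h(x)=h(x)g(x)=x^n-1$, and then cancel $h(x)$ on the right. Your explicit verification that the leading coefficient $h_m$ is a unit and hence that $h(x)$ is not a right zero-divisor is a welcome addition, since the paper simply asserts ``$h(x)$ is not a zero divisor'' without justification.
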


\begin{proof}
	$(\Longrightarrow)$  Since $c(x) \in C,$ there exists $a(x) \in R[x;\theta,\Delta_{\theta}]$ such that $c(x)=a(x) g(x).$  Hence,
	\[
	c(x) h(x)=a(x)g(x)h(x)=a(x)h(x)g(x)=a(x)(x^n-1)=0
	\text{ (in }R_{n,\Delta_{\theta}}\text{).}\]
	
	$(\Longleftarrow)$  Since $c(x)h(x)=0$ (in $R_{n,\Delta_{\theta}}$) for some $c(x) \in R_{n,\Delta_{\theta}},$ we have $c(x)h(x)=q(x)(x^n-1)$ for some $q(x) \in R[x;\theta,\Delta_{\theta}].$  In this case
	\[
	c(x)h(x)=q(x)(x^n-1)=q(x)h(x)g(x)=q(x)g(x)h(x).
	\]	
	Since $h(x)$ is not a zero divisor, we have $c(x)=q(x)g(x) \in \langle g(x) \rangle=C.$

\end{proof}

We also have known that all unit in $R$ are $\{1,3,1+2v,3+2v\}.$  It is easy to check that for all units $a$ in $R,$ we have $\theta(a)'$s are also units in $R.$  Hence, we have the following.

\begin{lem}\label{L-unit}
	If $a \in R$ is a unit, then $\theta(a) \in R$ is also a unit.
\end{lem}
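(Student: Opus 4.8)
The plan is to exploit that $\theta$ is a ring automorphism, so that it carries the group of units of $R$ onto itself; the multiplicative structure does all the work and no case analysis of the four units is needed. Concretely, suppose $a \in R$ is a unit, so there is some $b \in R$ with $ab = 1$. Applying $\theta$ and using that it preserves products (and fixes $1$, since $\theta(1)=1$), I would obtain
\[
\theta(a)\theta(b) = \theta(ab) = \theta(1) = 1.
\]
Since $R$ is commutative this single relation already exhibits $\theta(b)$ as an inverse of $\theta(a)$, so $\theta(a)$ is a unit with $\theta(a)^{-1} = \theta(a^{-1})$.

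The only facts I would invoke are that $\theta$ is an automorphism, established right after its definition in Section 2, and that $R$ is commutative so that checking one side of the inverse relation suffices. There is essentially no obstacle here: the statement is just an instance of the general principle that any ring isomorphism restricts to an isomorphism of the respective groups of units, and I do not expect any genuinely hard step.

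For completeness, and to match the computational flavour of the surrounding text, I could instead give a direct verification. Since the units of $R$ are exactly $\{1,\,3,\,1+2v,\,3+2v\}$, one evaluates $\theta(1)=1$, $\theta(3)=3$, $\theta(1+2v)=3+2v$, and $\theta(3+2v)=1+2v$, using $-2=2$ in $\ZZ_4$; hence $\theta$ merely permutes the set of units and the claim follows at once. This second route also records the useful byproduct that $\theta$ fixes $1$ and $3$ while swapping $1+2v$ and $3+2v$, which is consistent with the fact, used earlier, that $R^\theta=\ZZ_4$. I would present the automorphism argument as the main proof and mention the explicit computation as a remark.
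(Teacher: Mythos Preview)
Your proposal is correct. The paper's own argument is exactly your second route: it notes that the units of $R$ are $\{1,3,1+2v,3+2v\}$ and simply states that one checks $\theta(a)$ is again a unit for each of these four elements, without writing out the values. Your primary argument, using only that a ring automorphism preserves products and $1$ and hence restricts to a bijection on the group of units, is more conceptual and works over any ring; it avoids the case analysis entirely and explains \emph{why} the verification must succeed. The explicit computation you include as a remark matches the paper's approach and additionally records the permutation structure $1\mapsto 1$, $3\mapsto 3$, $1+2v\leftrightarrow 3+2v$, which the paper does not spell out.
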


Now, consider a $\Delta_{\theta}$-cyclic code $C$ of even length $n.$ Let
$C=\langle g(x)\rangle,$  where $g(x)$ is a right divisor of $x^n-1$ and its leading coefficient is a unit.  Then there exists $h(x)=h_0+h_1x+\cdots+h_kx^k \in R_{n,\Delta_{\theta}}$ such that $x^n-1=h(x)g(x).$  For $c(x)=c_0+c_1x+\cdots+c_{n-1}x^{n-1} \in C,$ by Lemma \ref{L-ch=0}, we have
$c(x)h(x)=0$ (in $R_{n,\Delta_{\theta}}$).  By considering the coefficients of $x^k,x^{k+1},\ldots,x^{n-1}$ in the last equation, we obtain two systems of equations each consisting of $n-k$ linear equations in $c_0,c_1,\ldots,c_{n-1}.$  To be more precise, we have the following two systems of equations, for $k$ is odd and even, respectively.
\[
\begin{cases}
	c_ih_k+c_{i+1}(\theta(h_{k-1})+\Delta_{\theta}(h_k))+c_{i+2}h_{k-2}+\cdots+c_{k+i}(\theta(h_0)+\Delta_{\theta}(h_1))=0, & \text{if }i \text{ is even,}\\
	
	c_i\theta(h_k)+c_{i+1}h_{k-1}+c_{i+2}(\theta(h_{k-2})+\Delta_{\theta}(h_{k-1}))+\cdots+c_{k+i}h_0+c_{k+i+1}\Delta_{\theta}(h_0)=0, & \text{if }i \text{ is odd.}
\end{cases}
\]
\[
\begin{cases}
	c_ih_k+c_{i+1}(\theta(h_{k-1})+\Delta_{\theta}(h_k))+c_{i+2}h_{k-2}+\cdots+c_{k+i}h_0+c_{k+i+1}\Delta_{\theta}(h_0)=0, & \text{if }i\text{ is even,}\\
	
	c_i\theta(h_k)+c_{i+1}h_{k-1}+c_{i+2}(\theta(h_{k-2})+\Delta_{\theta}(h_{k-1}))+\cdots+c_{k+i}(\theta(h_0)+\Delta_{\theta}(h_1))=0, & \text{if }i \text{ is odd,}
\end{cases}
\]
If we write the system of equations in a matrix form, we obtain $H\cv^T=0,$ for the matrix $H$ of dimension $(n-k) \times n.$  It implies that $GH^T=0,$ for a generator matrix $G$ of $C.$ Moreover, it is easy to check that $H$ is of the row echelon form with diagonal elements $h_k$ or $\theta(h_k),$ and having a submatrix of dimension $(n-k) \times (n-k).$  Since $h_k$ is a unit, by Lemma \ref{L-unit}, we have $\theta(h_k)$ is also a unit.  Then the rows of $H$ are linearly independent.  Hence,
\[
|\text{Row space of }H|=|R|^{n-k}=|C^\perp|.
\]
Thus, $H$ is a parity check matrix of $C$ and we have proven the following theorem.

\begin{theorem}
	Let $C$ be a $\Delta_{\theta}$-cyclic code of even length $n$, where $C=\langle g(x)\rangle,$ for some right divisor $g(x)$ of $x^n-1,$ whose leading coefficient is a unit and $x^n-1=h(x)g(x),$ for some $h(x)=h_0+h_1x+h_2x^2+\cdots+h_kx^k \in R_{n,\Delta_{\theta}}.$  Then the $(n-k) \times n$ parity-check matrix $H$ for $C$ is given by
	\[
	\left(
	\begin{smallmatrix}
		h_k & \theta(h_{k-1})+\Delta_{\theta}(h_k) & h_{k-2} & \cdots & \cdots & \cdots & \theta(h_0)+\Delta_{\theta}(h_1) & \cdots & \cdots & \cdots & \cdots & \cdots & 0 & 0\\
		0 & \theta(h_k) & h_{k-1} & \theta(h_{k-2}) & \cdots &\cdots  & \cdots &\Delta_{\theta}(h_0) & \cdots & \cdots & \cdots & \cdots & 0 & 0\\
		0 & 0 & h_k & \theta(h_{k-1})+\Delta_{\theta}(h_k) & h_{k-2} & \cdots & \cdots & \cdots & \theta(h_0)+\Delta_{\theta}(h_1) & \cdots & \cdots & \cdots & 0 & 0\\
		\vdots & \vdots & \vdots & \vdots & \vdots & \ddots & \ddots & \ddots & \ddots & \ddots & \ddots & \ddots & \vdots & \vdots\\
		0 & 0 & \cdots & \cdots & 0 & h_k & \theta(h_{k-1})+\Delta_{\theta}(h_k) & h_{k-2} &
		\cdots & \cdots & \cdots & \cdots & h_1 & \theta(h_0)+\Delta_{\theta}(h_1)
	\end{smallmatrix}
	\right)
	\]
	for an odd $k,$ and
	\[
	\left(
	\begin{smallmatrix}
		h_k & \theta(h_{k-1})+\Delta_{\theta}(h_k) & h_{k-2} & \cdots & \cdots & \cdots & h_0 & \Delta_{\theta}(h_0) & 0 & \cdots & \cdots & \cdots  & 0 & 0\\
		0 & \theta(h_k) & h_{k-1} & \theta(h_{k-2}) & \cdots &\cdots & h_1 &\theta(h_0)+\Delta_{\theta}(h_1) & 0 & \cdots  & \cdots & \cdots & 0 & 0\\
		0 & 0 & h_k & \theta(h_{k-1})+\Delta_{\theta}(h_k) & h_{k-2} & \cdots & h_2 &\theta(h_1)+\Delta_{\theta}(h_2) & \cdots & \cdots & \cdots & \cdots & 0 & 0\\
		\vdots & \vdots & \vdots & \vdots & \vdots & \ddots & \ddots & \ddots & \ddots & \ddots & \ddots  & \vdots & \vdots & \vdots\\
		0 & 0 & \cdots & \cdots & 0 & \theta(h_k) & h_{k-1} & \cdots &
		\cdots & \cdots & \cdots & \cdots & h_1 & \theta(h_0)+\Delta_{\theta}(h_1)
	\end{smallmatrix}
	\right)
	\]
	for an even $k.$
\end{theorem}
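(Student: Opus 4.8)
The plan is to reduce the entire statement to the single polynomial identity furnished by Lemma~\ref{L-ch=0}: since $n$ is even and $g(x)$ is a right divisor of $x^n-1$ with unit leading coefficient, a word $c(x)=c_0+c_1x+\cdots+c_{n-1}x^{n-1}$ lies in $C=\langle g(x)\rangle$ if and only if $c(x)h(x)=0$ in $R_{n,\Delta_\theta}$, where $x^n-1=h(x)g(x)$ and $\deg h(x)=k$. Everything then amounts to turning this one identity into a linear system on the coordinate vector $\cv$ and reading off its coefficient matrix.

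First I would expand $c(x)h(x)=\sum_{i=0}^{n-1}\sum_{j=0}^{k}c_i\,(x^i h_j)\,x^j$, using Corollary~\ref{C-sym} as the key computational input: when $i$ is even, $x^i h_j=h_j x^i$, whereas when $i$ is odd, $x^i h_j=(\theta(h_j)x+\Delta_\theta(h_j))x^{i-1}$. Substituting and regrouping by $c_i$, the coefficient of a fixed power $x^m$ becomes an $R$-linear combination in which each even-indexed $c_i$ contributes with coefficient $h_{m-i}$ while each odd-indexed $c_i$ contributes with coefficient $\theta(h_{m-i})+\Delta_\theta(h_{m-i+1})$ (with the convention $h_j=0$ outside $0\le j\le k$). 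Because $\deg c\le n-1$, there is no wrap-around modulo $x^n-1$ in the coefficients of $x^k,x^{k+1},\dots,x^{n-1}$, so setting precisely those $n-k$ coefficients to zero yields a clean system; taking $m=i+k$ for $i=0,1,\dots,n-k-1$ reproduces exactly the two families displayed before the statement, and whether the trailing term is of $h_0$/$\Delta_\theta(h_0)$-type or of $(\theta(h_0)+\Delta_\theta(h_1))$-type is dictated by the parity of $k=\deg h$. This is where the bulk of the bookkeeping lives.

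Next I would assemble these $n-k$ equations as $H\cv^T=0$ and take $H$ to be the coefficient matrix just obtained; by construction every $c(x)\in C$ satisfies $H\cv^T=0$, equivalently $GH^T=0$ for the generator matrix $G$, so each row of $H$ lies in $C^\perp$. To finish I would show $H$ has full row rank: row $i$ has its leading entry in column $i$, equal to $h_k$ when $i$ is even and to $\theta(h_k)$ when $i$ is odd, so $H$ is in row-echelon form with unit pivots, since $h_k$ is the unit leading coefficient of $h(x)$ and, by Lemma~\ref{L-unit}, $\theta(h_k)$ is then also a unit. Hence the $n-k$ rows are $R$-linearly independent and $|\mathrm{Row\ space}(H)|=|R|^{n-k}$. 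Finally, as $\deg g(x)=n-k$, Lemma~\ref{L-3cond} makes $C$ free of rank $k$, so $|C|=|R|^{k}$ and $|C^\perp|=|R|^{n}/|C|=|R|^{n-k}$; combined with $\mathrm{Row\ space}(H)\subseteq C^\perp$ this forces equality, and $H$ is a parity-check matrix for $C$.

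The main obstacle is the indexing in the middle step: one must track carefully how the odd-power rule $x^i h_j=(\theta(h_j)x+\Delta_\theta(h_j))x^{i-1}$ shifts each term one degree down, so that the coefficient of a fixed $x^m$ mixes an $h$-term contributed by one $c_i$ with a $\theta(h)+\Delta_\theta(h)$-term contributed by its neighbour, and to see cleanly why the resulting pattern alternates with the parity of the row index and flips wholesale when the parity of $k$ changes. The remaining ingredients---the echelon/full-rank observation and the order count $|C^\perp|=|R|^{n-k}$---are routine once Lemmas~\ref{L-ch=0}, \ref{L-unit}, and~\ref{L-3cond} are in hand.
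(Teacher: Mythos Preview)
Your proposal is correct and follows essentially the same route as the paper: reduce to $c(x)h(x)=0$ via Lemma~\ref{L-ch=0}, read off the coefficients of $x^k,\dots,x^{n-1}$ (using Corollary~\ref{C-sym} to handle $x^ih_j$ according to the parity of $i$), assemble the resulting $n-k$ linear relations as $H\cv^T=0$, observe the echelon shape with unit pivots $h_k,\theta(h_k)$ via Lemma~\ref{L-unit}, and finish by the cardinality count $|C^\perp|=|R|^{n-k}$. Your write-up is in fact slightly more explicit than the paper's in two places---the ``no wrap-around'' remark for degrees $k,\dots,n-1$ and the appeal to Lemma~\ref{L-3cond} to justify $|C|=|R|^{k}$---but these are details the paper simply leaves implicit.
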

	
\section{New linear codes over $\ZZ_4$}\label{Application}

In this section, we obtained many linear codes over $\ZZ_4$ with new parameters from the Gray image, residue code, and torsion code of skew-linear and skew-cyclic codes with derivation over $R$. First, we construct linear codes over $\ZZ_4$ from principally generated free $\Delta_\theta$-cyclic codes over $R$ in Corollary \ref{Construction}. The linear codes over $\ZZ_4,$ some of them are with new parameters, are listed in the table below.
\begin{center}
\begin{longtable}{c|c|c|c|c}
\hline \hline
{\multirow{2}{*}{$C$}} & $\Phi(C)$ & $Res(C)$ & $Tor(C)$ & $C^{PS}$ \\
\cline{2-5}
& $[n,4^{k_1}2^{k_2},d_L]$ & $[n,4^{k_1}2^{k_2},d_L]$ & $[n,4^{k_1}2^{k_2},d_L]$ & $[n,4^{k_1}2^{k_2},d_L]$ \\
\hline
$\langle 3+x \rangle$ & $[8,4^6 2^0,2]^{\ast)}$ & $[4,4^3 2^0,2]^{\ast)}$&$[4,4^3 2^0,2]^{\ast)}$& $[8,4^6 2^0,2]^{\ast)}$\\
\hline
$\begin{aligned}\langle (3+2v)+(3+2v)x+2x^2\\
	+(1+2v)x^3+(3+2v)x^4 \rangle
\end{aligned}$ &&& $[6,4^2 2^0,6]^{\ast)}$ &\\
\hline

$\begin{aligned}\langle 3v+(3+v)x+(3+v)x^2+(1+2v)x^3\\
+(2+2v)x^4+2x^5+vx^6+(1+3v)x^7\\
	+(1+v)x^8+x^9 \rangle
\end{aligned}$ &&$[12,4^3 2^0,10]^{\ast)}$&&\\
\hline
$\begin{aligned}\langle (1+3v)+3x+x^2+(3+2v)x^4\\
+2x^5+2vx^6+2x^7+(1+3v)x^8\\
	+3x^9+3x^{10}+x^{12} \rangle
\end{aligned}$ &&$[16,4^4 2^0,12]^{\ast\ast}$&&\\
\hline
\hline
\caption{Free linear codes over $\ZZ_4$}\label{Tabel-1}
\end{longtable}
\end{center}
\vspace{-10ex}
Notes for Table \ref{Tabel-1}, Table \ref{Tabel-2}, Example \ref{Ex-final-1}, and Example \ref{Ex-final-2}:
\begin{itemize}
    \item $^\ast$ means that the code has new $k_1$ and $k_2,$ but there is/are other codes of equal length in the database \cite{Aydin} with the same minimum Lee distance but with greater cardinalities.

    \item $^{\ast \ast}$ means that the code has minimum Lee distance greater than all existing linear codes of equal length with the same values of $k_1$ and $k_2$ in the database \cite{Aydin}.

    \item $^{\ast \ast \ast}$ means that the code has new $k_1$ and $k_2,$ with greater or equal cardinalities compared with all existing linear codes of equal length with the same value of minimum Lee distance in the database \cite{Aydin}.

    \item $^{\ast)}$ means that the code has the same parameters as some existing good linear codes of equal length in the database \cite{Aydin}.

	\item $C^{PS}$ is the code obtained by Plotkin-sum construction, namely $C^{PS}:=\{(\xv~|~\xv+\yv):\xv,\yv \in Res(C)\}$ or $C^{PS}:=\{(\xv~|~\xv+\yv):\xv,\yv \in Tor(C)\}.$
\end{itemize}
In the Table \ref{Tabel-1} above, notice that all linear codes over $\ZZ_4$ constructed by this method are free. This is not a coincidence. In fact, this is a direct consequence of Theorem \ref{T-Gray-1}, as follows.
\begin{cor}\label{C-Res}
	Let $C$ be a free $\Delta_\theta$-cyclic code of length $n$ over $R$ with $|C|=16^k.$  Then the free linear codes $Res(C)$ and $Tor(C)$ have parameter $4^k2^0,$ and the free linear code $\Phi(C)$ has parameter $4^{2k}2^0.$
\end{cor}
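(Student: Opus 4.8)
The plan is to reduce the entire statement to the two structural results already established, namely Lemma~\ref{Res-1} and Theorem~\ref{T-Gray-1}, after first pinning down the rank of $C$ as a free $R$-module. Since $|R|=16$, any free $R$-module of rank $m$ has exactly $16^m$ elements, so the cardinality of a free module determines its rank; thus the hypothesis $|C|=16^k$ forces the rank of $C$ to be exactly $k$. Hence $C$ admits a basis $\{\cv_1,\cv_2,\ldots,\cv_k\}$ with $\cv_i=\av_i+\bv_i v$, which is precisely the setup required by the earlier lemmas.

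Next I would invoke Lemma~\ref{Res-1}. Part~(1) gives that $Res(C)$ is free over $\ZZ_4$ with the $k$-element basis $\{\av_1,\ldots,\av_k\}$, and part~(2) gives that $Tor(C)$ is free over $\ZZ_4$ with the $k$-element basis $\{\av_1+\bv_1,\ldots,\av_k+\bv_k\}$. A linear code over $\ZZ_4$ is free of rank $k$ exactly when it is of type $4^k 2^0$: a free $\ZZ_4$-module of rank $k$ has $|\ZZ_4|^k=4^k$ elements and no $2$-torsion summand, so $k_2=0$. Consequently both $Res(C)$ and $Tor(C)$ have parameter $4^k 2^0$, as claimed.

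Finally, for the Gray image I would apply Theorem~\ref{T-Gray-1} to the same basis: it yields that $\Phi(C)$ is free over $\ZZ_4$ with basis $\{\Phi(\cv_1),\Phi(\cv_1 v),\ldots,\Phi(\cv_k),\Phi(\cv_k v)\}$, a set of $2k$ elements. Therefore $\Phi(C)$ is a free $\ZZ_4$-code of rank $2k$, i.e.\ of type $4^{2k} 2^0$, which completes the statement.

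There is no genuinely hard step here; the corollary is a bookkeeping consequence of the cited results, and the counts $4^k$, $4^k$, and $4^{2k}$ read directly off the basis sizes $k$, $k$, and $2k$ produced above. The only point demanding a little care is the translation \emph{``free $\ZZ_4$-module of rank $m$ $\Longleftrightarrow$ type $4^m 2^0$''}, which rests on the fact that the $2^{k_2}$ part of a $\ZZ_4$-code vanishes exactly for free codes; once this is noted, the freeness assertions of Lemma~\ref{Res-1} and Theorem~\ref{T-Gray-1} deliver the parameters immediately.
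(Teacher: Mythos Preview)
Your proposal is correct and matches the paper's approach: the paper itself presents this corollary as a direct consequence of Theorem~\ref{T-Gray-1} (together with Lemma~\ref{Res-1}) without giving further details, and your argument spells out exactly this derivation---reading off the rank $k$ from $|C|=16^k$, then invoking the two cited results to obtain free $\ZZ_4$-codes of ranks $k$, $k$, and $2k$. Your observation that the $\Delta_\theta$-cyclic hypothesis plays no role (only freeness is used) is also accurate.
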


\begin{rem}
The part of Corollary \ref{C-Res} related to the parameters of $Res(C)$ and $Tor(C)$ does not hold for linear codes over $\ZZ_4+u\ZZ_4,$ with $u^2=1$ considered by Sharma and Bhaintwal \cite{Sharma2018} (e.g., see Example 7 in \cite{Sharma2018}).\end{rem}

\subsection{Notes on a computational simplification}

From Corollary \ref{C-Res}, we conclude that we need non-free linear codes over $R$ to obtain non-free Gray images, residue codes, and torsion codes. Here, we use a slightly modified construction from principally generated $\Delta_\theta$-cyclic codes. Notice that for any $g(x) \in R_{n,\Delta_{\theta}},$ the set $\{g(x),xg(x),\ldots,x^{n-1}g(x)\}$ is a generating set, which is not necessary minimal, of the (not necessarily free) $\Delta_{\theta}$-cyclic code $C=\langle g(x) \rangle$ over $R.$  We 
consider the subcode $C_k=\langle g(x),xg(x),\ldots,x^{k-1}g(x)\rangle$ of $C$ for some $k\leq n$. 
From there, we can obtain $Res(C_k),Tor(C_k),$ and $\Phi(C_k).$  For the case of $\Delta_{\theta}$-cyclic codes over $R,$  we can simplify our computation as follows.

Let $C$ be a $\Delta_{\theta}$-cyclic code over $R$ of length $n$ generated by $\{g(x),xg(x),\ldots,x^{n-1}g(x)\},$ where $g(x)=c_0+c_1x+c_2x^2+\cdots+c_{n-1}x^{n-1} \in R_{n,\Delta_{\theta}}$ and $c_i=a_i+b_iv,$ for $0 \leq i \leq n-1.$  To be more precise, for an even and an odd $n,$ the generator of $C \seq R^n$ consists of the following vectors, respectively:
\[
\begin{aligned}
	g_1&=(a_0+b_0v,~a_1+b_1v,~\ldots,~a_{n-2}+b_{n-2}v,~a_{n-1}+b_{n-1}v),\\
	g_2&=(a_{n-1}+b_{n-1}+b_0-b_{n-1}v,~a_0+b_0+b_1-b_0v,~\ldots,~a_{n-3}+
	b_{n-3}+b_{n-2}-b_{n-3}v,\\
	   &\quad~a_{n-2}+b_{n-2}+b_{n-1}-b_{n-2}v),\\
	g_3&=(a_{n-2}+b_{n-2}v,~a_{n-1}+b_{n-1}v,~\ldots,~a_{n-4}+b_{n-4}v,~a_{n-3}+b_{n-3}v),\\
	   &\quad\vdots\\
	g_n&=(a_1+b_1+b_2-b_1v,~a_2+b_2+b_3-b_2v,~\ldots,~a_{n-1}+
	b_{n-1}+b_0-b_{n-1}v,~a_0+b_0+b_1-b_0v),
\end{aligned}
\]
and
\[
\begin{aligned}
	g_1&=(a_0+b_0v,~a_1+b_1v,~\ldots,~a_{n-2}+b_{n-2}v,~a_{n-1}+b_{n-1}v),\\
	g_2&=(a_{n-1}+b_{n-1}+b_0-b_{n-1}v,~a_0+b_0+b_1-b_0v,~\ldots,~a_{n-3}+
	b_{n-3}+b_{n-2}-b_{n-3}v,\\
	&\quad~a_{n-2}+b_{n-2}+b_{n-1}-b_{n-2}v),\\
	g_3&=(a_{n-2}+b_{n-2}v,~a_{n-1}+b_{n-1}v,~\ldots,~a_{n-4}+b_{n-4}v,~a_{n-3}+b_{n-3}v),\\
	&\quad\vdots\\
	g_n&=(a_1+b_1v,~a_2+b_2v,~\ldots,~a_{n-1}+b_{n-1}v,~a_0+b_0v).
\end{aligned}
\]
By applying the Lemma \ref{Res-1}, we obtain the generator of $Res(C)$ for an even and an odd $n$ that consists of the following vectors, respectively:
\[
\begin{aligned}
	Res(g_1)&=(a_0,~a_1,~a_2,~\ldots,~a_{n-2},~a_{n-1}),\\
	Res(g_2)&=(a_{n-1}+b_{n-1}+b_0,~a_0+b_0+b_1,~\ldots,~a_{n-3}+
	b_{n-3}+b_{n-2},~a_{n-2}+b_{n-2}+b_{n-1}),\\
	Res(g_3)&=(a_{n-2},~a_{n-1},~\ldots,~a_{n-4},~a_{n-3}),\\
	&\quad\vdots\\
	Res(g_n)&=(a_1+b_1+b_2,~a_2+b_2+b_3,~\ldots,~a_{n-1}+
	b_{n-1}+b_0,~a_0+b_0+b_1),
\end{aligned}
\]
and
\[
\begin{aligned}
	Res(g_1)&=(a_0,~a_1,~a_2,~\ldots,~a_{n-2},~a_{n-1}),\\
	Res(g_2)&=(a_{n-1}+b_{n-1}+b_0,~a_0+b_0+b_1,~\ldots,~a_{n-3}+
	b_{n-3}+b_{n-2},~a_{n-2}+b_{n-2}+b_{n-1}),\\
	Res(g_3)&=(a_{n-2},~a_{n-1},~\ldots,~a_{n-4},~a_{n-3}),\\
	&\quad\vdots\\
	Res(g_n)&=(a_1,~a_2,~\ldots,~a_{n-1},~a_0).
\end{aligned}
\]
Moreover, we obtain the generator of $Tor(C)$ for an even and an odd $n$ that consists of the following vectors, respectively:
\[
\begin{aligned}
	Tor(g_1)&=(a_0+b_0,~a_1+b_1,~a_2+b_2,~\ldots,~a_{n-2}+b_{n-2},~a_{n-1}+b_{n-1}),\\
	Tor(g_2)&=(a_{n-1}+b_0,~a_0+b_1,~\ldots,~a_{n-3}+
	b_{n-2},~a_{n-2}+b_{n-1}),\\
	Tor(g_3)&=(a_{n-2}+b_{n-2},~a_{n-1}+b_{n-1},~\ldots,~a_{n-4}+b_{n-4},~a_{n-3}+b_{n-3}),\\
	&\quad\vdots\\
	Tor(g_n)&=(a_1+b_2,~a_2+b_3,~\ldots,~a_{n-1}+b_0,~a_0+b_1),
\end{aligned}
\]
and
\[
\begin{aligned}
	Tor(g_1)&=(a_0+b_0,~a_1+b_1,~a_2+b_2,~\ldots,~a_{n-2}+b_{n-2},~a_{n-1}+b_{n-1}),\\
	Tor(g_2)&=(a_{n-1}+b_0,~a_0+b_1,~\ldots,~a_{n-3}+
	b_{n-2},~a_{n-2}+b_{n-1}),\\
	Tor(g_3)&=(a_{n-2}+b_{n-2},~a_{n-1}+b_{n-1},~\ldots,~a_{n-4}+b_{n-4},~a_{n-3}+b_{n-3}),\\
	&\quad\vdots\\
	Tor(g_n)&=(a_1+b_1,~a_2+b_2,~\ldots,~a_{n-1}+b_{n-1},~a_0+b_0).
\end{aligned}
\]
For any $k \leq n,$ if $C_k$ is generated by the set $\{g_1,g_2,\ldots,g_k\},$ then the $Res(C_k)$ and $Tor(C_k)$ is generated by $\{Res(g_1),Res(g_2),\ldots,Res(g_k)\}$ and $\{Tor(g_1),Tor(g_2),\ldots,Tor(g_k)\},$ respectively. This method can reduce the computation time significantly. Using this method, we can obtain some linear codes over $\ZZ_4$ with good parameters, as listed in the table below.
\begin{center}
\begin{longtable}{c|c|c|c}
\hline\hline
{\multirow{2}{*}{Generators of $C$}} 
& $Res(C)$ & $Tor(C)$ & $C^{PS}$ \\
\cline{2-4}
& $[n,4^{k_1}2^{k_2},d_L]$ & $[n,4^{k_1}2^{k_2},d_L]$ & $[n,4^{k_1}2^{k_2},d_L]$ \\
\hline

		$\{g_1(x),xg_1(x)\}$ &  $[4,4^1 2^1,4]^\ast$ && $[8,4^2 2^2,4]$ \\
		\hline
		$\{g_2(x),xg_2(x),x^2g_2(x)\}$ &  $[4,4^1 2^2,2]^\ast$ && $[8,4^2 2^4,2]$ \\
		\hline
		$\{g_3(x),xg_3(x),x^2g_3(x)\}$ &  $[5,4^2 2^1,4]^{\ast\ast\ast}$ & &  \\
		\hline
		$\{g_4(x),xg_4(x),x^2g_4(x)\}$ &  $[6,4^2 2^1,4]^\ast$ & & \\
		
		\hline
		$\{g_5(x),xg_5(x),x^2g_5(x),x^3g_5(x),x^4g_5(x)\}$ & $[6,4^2 2^3,4]^{\ast)}$ & & $[12,4^4 2^6,4]^\ast$ \\
		
		\hline
		$\{g_6(x),xg_6(x),x^2g_6(x),x^3g_6(x)\}$ & $[6,4^3 2^1,4]^\ast$ & &$[12,4^6 2^2,4]$ \\
		
		\hline
		$\begin{aligned}
			\{g_7(x),xg_7(x),x^2g_7(x),x^3g_7(x),\\
			x^4g_7(x),x^5g_7(x), x^6 g_7(x)\}
		\end{aligned}$ & $[8,4^4 2^3,4]^{\ast)}$ & & $[16,4^{8} 2^6,4]^\ast$\\
		\hline
		$\begin{aligned}
			\{g_8(x),xg_8(x),x^2g_8(x),x^3g_8(x),\\
			x^4g_8(x),x^5g_8(x)\}
		\end{aligned}
		$ &  & $[8,4^5 2^1,4]^{\ast\ast\ast}$ &$[16,4^{10} 2^2,4]^\ast$ \\
		\hline
		$\{g_9(x),xg_9(x),x^2g_9(x)\}$ & $[9,4^3 2^0,7]^{\ast)}$ &   \\
		\hline
		$\{g_{10}(x),xg_{10}(x),x^2g_{10}(x)\}$  & $[10,4^3 2^0,8]^{\ast)}$ &   \\
		\hline
		$\{g_{11}(x),xg_{11}(x)\}$  & $[15,4^2 2^0,15]^{\ast\ast}$ &  \\
		\hline
		$\{g_{12}(x),xg_{12}(x),x^2g_{12}(x)\}$ & & $[18,4^3 2^0,14]$ & \\
		\hline
		\hline
		\caption{Linear codes over $\ZZ_4$}\label{Tabel-2}
	\end{longtable}
\end{center}
\vspace{-12ex}
In Table \ref{Tabel-2},
\begin{itemize}
	\item $g_1(x)=(1+3v)+2x+(3+3v)x^2.$

	\item $g_2(x)=(1+v)+(2+2v)x+(1+3v)x^2.$
	
	\item $g_3(x)=(1+3v)+2vx+(2+2v)x^2+2vx^3+(1+3v)x^4.$
	
	\item $g_4(x)=3+(1+3v)x+(3+v)x^2+(2+3v)x^3.$
	
	\item $g_5(x)=(3+3v)+(1+3v)x+(3+3v)x^3+(3+v)x^4+2x^5.$
	
	\item $g_6(x)=(1+v)+x+(2+v)x^2+vx^3+3vx^5.$
	
	\item $g_7(x)=(1+v)+ 3 x+ (2+3v)x^2+ (3+v)x^3+ (2+2v) x^4+ 2 x^6+ x^7.$
	
	\item $g_8(x)=2v+ (2+3v) x+ (1+3v) x^2+ (1+2v) x^3+ 2v x^4+ (1+v) x^5+ x^6+ 3v x^7.$
	
	\item $g_9(x)=(1+v)+ (1+3v) x+ (3+2v) x^2+ 3x^3+ (3+3v) x^4+ (2+2v) x^5+ (2+3v) x^6+ (3+3v) x^7+ (1+3v) x^8.$
	
	\item $g_{10}(x)=(1+3v)+ (2+2v) x+ 3 x^2+ v x^4+ 3 x^5+ (3+3v) x^6+  x^7+  x^8.$
	
	\item $g_{11}(x)=(1+v)+ (2+v) x^2+ (3+2v) x^3+ (3+v) x^5+ (1+2v) x^6+ 2 x^7+ (3+v) x^8+ 3 x^9+ 2v x^{10}+ (3+2v) x^{11}+ (1+2v) x^{12}+ (2+v) x^{13}+ (1+3v) x^{14}.$
	
	\item $g_{12}(x)=1+ 2v x+ (3+2v) x^2+ (2+2v) x^3+ (1+2v) x^4+ x^5+ v x^6+ x^7+ (1+2v) x^8+ (3+2v) x^9+ (1+3v) x^{10}+ (3+3v) x^{11}+ (2+v) x^{12}+ (3+3v) x^{13}+  x^{14}+ (2+v) x^{15}+ (2+2v) x^{16}.$
\end{itemize}

Exactly the same method can also be used for $\Phi(C_k).$  This observation brings us to the conclusion that the codes $Res(C_k),$  $Tor(C_k),$ and $\Phi(C_k)$ have at most $4^k,~4^k$
, and $4^{2k}$ 
codewords,  respectively.  Moreover, the similar observation can also be applied to the codes over $\ZZ_4+u\ZZ_4,$ with $u^2=1,$ investigated by Sharma and Bhaintwal \cite{Sharma2018} and it is easy to verify that in this case, the codes
$Res(C_k),$  $Tor(C_k),$ and $\Phi(C_k)$ have at most $4^{2k}$ codewords.

\subsection{Construction from another method}

By applying the construction in \cite{Hopein2022}, we can obtain even more new linear codes over $\ZZ_4$ with the highest known minimum Lee distance, previously unknown to exist in the database \cite{Aydin}. As illustrations, we provide several examples below.  In the following examples, the code $C_i$ refers to the code generated by the polynomial $g_i(x)$ in Table \ref{Tabel-2}.

\begin{ex}
	In these examples, we use Lemma 4.6 in \cite{Hopein2022} to construct new linear codes over $\ZZ_4.$
\begin{itemize}
	\item From the code with parameters $[6,4^2 2^0, 6]$ in Table \ref{Tabel-1}, we obtained linear codes with parameters $[12, 4^2 2^1, 12]^{*)},$ $[24, 4^3 2^0, 24]^{*)},$ $[24, 4^2 2^2, 24]^{*)},$ $[48, 4^3 2^1, 48]^{***},$ and $[48, 4^2 2^3, 48]^{*)}.$

	\item From the code with parameters $[12,4^3 2^0, 10]$ in Table \ref{Tabel-1}, we obtained linear codes with parameters $[24, 4^3 2^1, 20],$ $[48, 4^4 2^0, 40],$ and $[48, 4^3 2^2, 40]^{***}.$

	\item From the code $C_3$ with parameters $[5, 4^2 2^1, 4]$ we obtained linear codes with parameters $[10, 4^2 2^2, 8]^{***},$ $[20, 4^3 2^1, 16],$ $[20, 4^2 2^3, 16]^{*},$ $[40, 4^3 2^2, 32]^{*},$ and $[40, 4^2 2^4, 32].$

	\item From the code $C_5$ with parameters $[6, 4^2 2^3, 4]$ we obtained a linear code with parameters $[12, 4^2 2^4, 8]^{**}.$

	\item From the code $C_6$ with parameters $[6, 4^3 2^1, 4]$ we obtained a linear code with parameters $[12, 4^3 2^2, 8]^*.$

	\item From the code $C_9$ with parameters $[9,4^3 2^0, 7]$ we obtained linear codes with parameters $[18, 4^3 2^1, 14]^{**},$ $[36, 4^4 2^0, 28]^{**},$ and $[36, 4^3 2^2, 28]^{**}.$

	\item From the code $C_{10}$ with parameters $[10,4^3 2^0, 8]$ we obtained a linear code with parameters $[40, 4^4 2^0, 32].$
	\end{itemize} \label{Ex-final-1}
\end{ex}
\begin{ex}
In these examples, we use Lemma 4.3 in \cite{Hopein2022} to construct new linear codes over $\ZZ_4.$
\begin{itemize}
	\item From the code with parameters $[12,4^3 2^0, 10]$ in Table \ref{Tabel-1} and the code with parameters $[24, 4^3 2^0, 24],$ respectively, we obtained a linear code with parameters $[36, 4^3 2^0, 34]^{*)}.$
	
	\item From the code $C_3$ with parameters $[5, 4^2 2^1, 4]$ and the code with parameters $[12, 4^2 2^1, 12]$ we obtained a linear code with parameters $[17, 4^2 2^1, 16]^{*}.$
	
	\item From the code $C_6$ with parameters $[6, 4^3 2^1, 4]$ and the code with parameters $[20, 4^3 2^1, 16]$ we obtained a linear code with parameters $[26, 4^3 2^1, 20]^{*}.$
\end{itemize} \label{Ex-final-2}
\end{ex}


\section{Concluding remarks}
We have investigated the algebraic structures of skew-cyclic codes, also known as $\theta$-cyclic codes, with a derivation $\Delta_{\theta}$ over the ring $R=\ZZ_4+v\ZZ_4,$ with $v^2=v,$ extending the observation of Boucher and Ulmer \cite{Boucher2014}, where they defined and considered the skew-cyclic codes with derivation over a finite field.  To our best knowledge, this is the third attempt after the paper by Sharma and Bhaintwal \cite{Sharma2018} and Patel and Prakash \cite{Patel2022}.   


As a consequence, we constructed several new codes over $\ZZ_4$ unknown to exist before due to the database \cite{Aydin}, with good parameters.  All computations to find the codes were done by Python and Magma computational algebra system \cite{Magma}.





Regarding the derivation, it is easy to show that the map $\Delta_{\theta}(x)=(3+2v)(\theta(x)-x)$ also defines a derivation on $R.$  All properties which hold for the derivation $\Delta_{\theta}(x)=(1+2v)(\theta(x)-x)$ in this paper also hold for the derivation $\Delta_{\theta}(x)=(3+2v)(\theta(x)-x).$

The method explained in Section \ref{Application} can be modified for the case of derivation $\Delta_{\theta}(x)=(3+2v)(\theta(x)-x).$  There is some hope to obtain many more examples of linear codes over $\ZZ_4$ with better parameters.  As an example, $Tor(C)$ of the code $C:=\langle g(x),xg(x),x^2g(x) \rangle,$ with
$g(x)=3v+(2+v)x+3vx^2+vx^3,$ has parameters $[4,4^1 2^2,4],$ which is better than the one in Table \ref{Tabel-1}.

\section*{Acknowledgments}

This research is supported by the Institut Teknologi Bandung (ITB) and the Ministry of Education, Culture, Research and Technology
(\emph{Kementerian Pendidikan, Kebudayaan, Riset dan Teknologi (Kemdik\-budristek)}),
Republic of Indonesia.



\end{document}